\documentclass[a4paper,11pt]{article}
\usepackage[utf8]{inputenc}
\usepackage{lmodern}
\usepackage[T1]{fontenc}
\usepackage{amsmath,amsthm,amsfonts,amssymb,bm,mathtools}
\mathtoolsset{showonlyrefs=true}

\usepackage{dsfont}			
\usepackage{enumitem}
\usepackage{graphicx,psfrag,epsf}
\usepackage{upgreek}
\graphicspath{ {./figures/} }
\usepackage{booktabs}
\usepackage{subcaption}
\usepackage{xcolor}
\usepackage[authoryear]{natbib}
\usepackage[right=2cm,left=2cm,top=2cm,bottom=3cm]{geometry}
\usepackage[font={small,it}]{caption}	
\usepackage[colorlinks,linkcolor=blue,citecolor=blue,urlcolor=blue]{hyperref}

\usepackage{multirow}
\usepackage{makecell}
\usepackage{afterpage}

\def\bN{{\mathbb N}}
\def\bfY{\mathbf{Y}}

\usepackage{amsthm}
\newtheorem{theorem}{Theorem}[section]

\newtheorem{proposition}[theorem]{Proposition}
\newtheorem{remark}[theorem]{Remark}

\title{\bf Data compression for fast dimension reduction and clustering of high-dimensional discrete data}
 \author{Silvia D'Angelo\\
  School of Computer Science and Statistics, Trinity College Dublin\\
   and \\
   Michael Fop \\
  School of Mathematics and Statistics, University College Dublin}
\date{}

\begin{document}
\maketitle

\begin{abstract}
High-dimensional discrete data are common in genomics, microbiomics, survey research, and digital behavioral analysis. Clustering such data is challenging because many existing methods are computationally expensive, sensitive to sparsity and discreteness, or designed for specific data types. We introduce a deterministic dimension-reduction framework for clustering high-dimensional discrete observations. The approach compresses observations into a low-dimensional continuous representation using weighted sums derived from a scaled positional encoding, yielding a numerically stable transformation applicable to both binary and count data. Several theoretical properties are established. The compression mapping is injective, ensuring that distinct observations remain distinguishable after transformation. Under mild regularity conditions, the compressed variables are approximately Gaussian, supporting the use of model-based clustering in the reduced space. We further show that separation between cluster centroids is preserved, indicating that location-based cluster structure remains identifiable following dimension reduction. Simulation studies demonstrate accurate cluster recovery across diverse settings, while achieving substantial computational savings compared with commonly used dimension-reduction techniques. Applications to microbiome data and United Nations rolling call voting data highlight the method’s practical utility. Overall, the framework offers a scalable, efficient, and broadly applicable solution for clustering high-dimensional discrete data.
\end{abstract}

\smallskip
\noindent \textbf{Keywords:}Binary data; Count data; Model-based clustering; Finite mixture models; Deterministic embedding; Scalable clustering.



\section{Introduction}
\label{sec:intro}

Clustering is a fundamental tool for identifying groups of observations that exhibit similar patterns across measured variables. It is widely used in exploratory analysis, classification, dimension reduction, and scientific discovery \citep{gormley:2023,wade:2023,dinh:2025}. Increasingly, clustering problems involve high-dimensional discrete data, including categorical, binary, and count-valued observations where the number of variables may be comparable to or exceed the sample size. Examples arise in single-cell RNA sequencing \citep{kiselev:2019,grabski:2023}, microbiome studies \citep{Shi:2022,chen:2025}, latent class analysis of survey data \citep{weller:2020}, and social media analytics \citep{champon:2026}. See \cite{dinh:2025} for a review of clustering methods for categorical data.

High dimensionality creates substantial challenges for clustering. As the number of variables increases, distances become less informative, noise variables can overwhelm signal, and flexible models become overparameterized. Numerous methods have been proposed for continuous data \citep{bouveyron:2014}, including sparse Bayesian Gaussian mixtures \citep{yao:2025}, latent factor and repulsive mixture models \citep{ghilotti:2025,chandra:2023}, scalable Gaussian mixtures with outliers \citep{zhou:2025}, random-projection approaches \citep{mori:2026,anderlucci:2022}, nonparametric clustering \citep{wang:nonpar:2025}, and spectral graph clustering \citep{yang:2021}. However, these methods are primarily designed for continuous observations.

For discrete data, existing approaches include partitional, distance-based, spectral, and model-based methods. The k-modes algorithm extends k-means to categorical observations \citep{huang:1998}, while distance-based methods rely on similarity measures tailored to discrete values \citep{guha:2000}. Spectral methods have been adapted to binary and categorical data, often with theoretical recovery guarantees \citep{tian:2024,lyu:2026}. Model-based approaches include latent class analysis \citep{weller:2020}, Dirichlet-multinomial mixtures \citep{bouguila:2009}, binary-data mixtures \citep{papastamoulis:2017,tang:2015,failli:2025}, Bayesian clustering based on Hamming distances \citep{argiento:2025}, variational Bayes mixtures with variable selection \citep{rao:2025}, multinomial count-data mixtures \citep{papastamoulis:2023}, copula-based clustering \citep{brini:2025}, feature-screening approaches for single-cell RNA sequencing \citep{Wang:2025}, and Poisson-lognormal or logistic-normal multinomial models \citep{payne:2025,fang:2023}.

Despite their diversity, these methods face common challenges in genuinely high-dimensional settings. Partitional algorithms can be sensitive to initialization, distance-based methods may struggle with ties and sparsity, spectral methods require computationally intensive matrix decompositions, and model-based approaches often rely on restrictive distributional assumptions and become costly as dimension increases. Consequently, performance frequently deteriorates when the number of variables $p$ greatly exceeds the sample size $n$.

A complementary stretegy is to reduce the dimension of the data before clustering. 
For continuous data, principal component analysis, random projections, and spectral embeddings are routinely used as pre-clustering steps. For discrete data, however, dimension reduction is less straightforward because numerical embeddings can distort structure, while distance-based approaches may be sensitive to sparsity, ties, and the choice of dissimilarity. Existing proposals include ensemble dissimilarity matrices \citep{amiri:2018}, spectral projections \citep{tian:2024}, rank-based methods \citep{lausser:2018}, and multidimensional scaling for count data \citep{chen:2025}. Most, however, require an $n \times n$ dissimilarity matrix, involve iterative or stochastic optimization, or are tailored to specific data types. A simple, deterministic, and broadly applicable compression method remains lacking.

We address this problem by proposing a deterministic compression framework for clustering high-dimensional discrete data. Each observation is mapped to a low-dimensional continuous representation through weighted sums of the original variables using a numerically stable scaling of a base-$K$ positional encoding. The method is fast, deterministic, and applicable to binary, categorical, and count-valued data. Unlike model-based approaches, it does not require fitting a high-dimensional mixture model; unlike most dimension-reduction methods, it avoids pairwise dissimilarities, eigenvalue decompositions, and stochastic optimization. 
The compressed representation can be used directly as input to standard clustering procedures, including k-means, Gaussian mixture models, or distance-based methods, making the proposal useful both as a standalone clustering pipeline and as a fast initialization tool for more complex clustering models.

The proposed compression has several useful properties. The mapping is injective, preventing distinct observations from collapsing to identical compressed values. Under mild moment and independence conditions, compressed coordinates are approximately Gaussian, providing a rationale for Gaussian mixture modelling in the compressed space. Furthermore, separation between cluster centroids is preserved, ensuring that location-based cluster structure remains identifiable after compression.

The remainder of the paper is organized as follows. Section~\ref{sec:meth} introduces the compression method and its theoretical properties. Section~\ref{sec:clust_compression_space} discusses clustering in the compressed space. Section~\ref{sec:sim} presents simulation results, and Section~\ref{sec:application} illustrates the method using Irish baby-name data and a microbiome dataset comparing Hadza hunter-gatherers with Italian urban adults. Section~\ref{sec:discussion} concludes.

\section{Data compression and clustering}
\label{sec:meth}

Let $\mathbf{Y}\in\mathbb{N}^{n\times p}$ be a data matrix, where each column records the realizations of a discrete random variable taking values in $\{0,\dots,K-1\}$, for some $K\in\bN$, with $K\geq 2$. Let $\mathbf{y}_i=(y_{i1},\ldots,y_{ij}, \ldots, y_{ip})$ denote the $i^{\text{th}}$ row of $\mathbf{Y}$. Because each entry lies in the discrete interval $[0, K-1]$, the row $\mathbf{y}_i$ may be interpreted as a $p$-digit number expressed in base $K$, whose base-10 representation gives the \emph{compression}:
\begin{equation}
z_i^{*} = \sum_{j=1}^p K^{p-j} y_{ij}, \qquad i=1,\ldots,n.
\label{eq:zstar}
\end{equation}
The mapping in \eqref{eq:zstar} is injective, as it ensures that $\mathbf{y}_{i} \neq \mathbf{y}_h \Rightarrow z_{i}^{*} \neq z_{h}^{*} $, $i,h = 1, \ldots, n, i\neq h$. Therefore, it guarantees that distinct statistical units in the original $\mathbf{Y}$ data are retained as distinct in the compressed data vector $\mathbf{z}^*=(z_1^*,\ldots, z_i^*,\ldots, z_n^*)$. It thus projects $\bfY$ into a one-dimensional space through a scalable, deterministic, and easily implemented transformation. 

In exact arithmetic the mapping is reversible, but the weights $K^{p-j}$ grow exponentially in $p$, hence for moderate to large $p$ the transformation becomes numerically infeasible, and reconstruction degrades as precision is lost \citep{shannon:1948}. This motivates an alternative that preserves injectivity while remaining numerically stable.

We define the \emph{scaled} base-10 compression of $\mathbf{y}_i$ as
\begin{equation}
z_i = \sum_{j=1}^p K^{\frac{p-j}{pK}} y_{ij}, \qquad i=1,\ldots,n.
\label{eq:z}
\end{equation}
The coefficients in \eqref{eq:z} are uniformly bounded in $[1,K^{1/K})$, with $\sup_{K\geq2}K^{1/K}=3^{1/3}$, and consequently $z_i \in \mathbb{R}^+$. Unlike \eqref{eq:zstar}, the proposed compression remains numerically stable for large $p$ and arbitrary $K$, and it retains injectivity.

\begin{proposition}
\label{prop:1}
The mapping $\mathbf{y}_i \mapsto z_i$ defined in \eqref{eq:z} is injective.
\end{proposition}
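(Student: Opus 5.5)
The plan is to reduce injectivity to an algebraic-independence statement about a single number. Set $a = K^{1/(pK)} > 1$, so that the weights in \eqref{eq:z} are $a^{p-j}$ and $z_i = P_{\mathbf{y}_i}(a)$, where $P_{\mathbf{y}}(x)=\sum_{j=1}^p y_j x^{p-j}$ is a polynomial of degree at most $p-1$ with integer coefficients. Suppose $z_i = z_h$. Then $D(x)=\sum_{j=1}^p (y_{ij}-y_{hj})x^{p-j}$ is an integer polynomial of degree at most $p-1$ with $D(a)=0$. If we show that the algebraic degree $[\mathbb{Q}(a):\mathbb{Q}]$ is at least $p$, then $D$ must be the zero polynomial. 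Hence $\mathbf{y}_i=\mathbf{y}_h$, which is exactly injectivity.

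\textbf{Step 1: rewrite $a$ in lowest terms.} Write $K=m^r$ with $m\ge 2$ not a perfect power and $r\ge1$. Then $a = m^{r/(pK)}$. Reduce the fraction to $r/(pK)=s/t$ with $\gcd(s,t)=1$, so that $t = pK/\gcd(r,pK)\ge pK/r$. Since $K=m^r\ge 2^r>r$, we get $t>p$.

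\textbf{Step 2: compute the degree of $a=m^{s/t}$.} The number $a$ is the positive real root of $x^t - m^s$. I would invoke the classical Capelli/Vahlen criterion for binomials: for a positive rational $c$, $x^t-c$ is irreducible over $\mathbb{Q}$ provided $c$ is not a $q$-th power in $\mathbb{Q}$ for any prime $q\mid t$. The exceptional ``$-4c^4$'' case does not arise, because $c>0$.

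Now suppose $m^s=b^q$ for some prime $q\mid t$. Since $m$ is not a perfect power, the gcd of the exponents in the prime factorization of $m$ is $1$. Comparing exponents, $q$ must divide $s$ times each of these exponents, hence $q\mid s$. This contradicts $\gcd(s,t)=1$. So $x^t-m^s$ is irreducible, and $[\mathbb{Q}(a):\mathbb{Q}]=t>p>p-1$.

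\textbf{Step 3: conclude.} The minimal polynomial of $a$ has degree $t\ge p$, so it cannot divide the nonzero polynomial $D$ of degree at most $p-1$. Therefore $D\equiv 0$ and the map is injective. Note that the argument never uses the bound $|y_{ij}-y_{hj}|\le K-1$; it shows that the weights $1,a,\dots,a^{p-1}$ are linearly independent over $\mathbb{Q}$.

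The main obstacle is Step 2, namely pinning down the exact degree of $K^{1/(pK)}$ when $K$ is itself a perfect power (for example $K=4$, where $x^{4p}-4$ factors). The reduction to $m$ not a perfect power, combined with the binomial irreducibility theorem, handles this. The inequality $t>p$ then follows from the elementary bound $r<2^r\le K$.
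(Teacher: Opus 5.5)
Your proof is correct and takes the same route as the paper: you write $z_i-z_h$ as an integer polynomial of degree at most $p-1$ evaluated at $K^{1/(pK)}$, then use that this number has degree at least $p$ over $\mathbb{Q}$. The paper only asserts that degree bound with a citation to Lang, whereas your Steps 1--2 prove it (writing $K=m^r$ with $m$ not a perfect power, reducing the exponent to $s/t$ in lowest terms, and applying Capelli's criterion), and this covers perfect-power cases such as $K=4$, where $x^{pK}-K$ is reducible.
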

The proof is given in the Appendix. The scaled base-10 compression in \eqref{eq:z} is injective,  ensuring that $\mathbf{y}_i \neq \mathbf{y}_h \implies z_i \neq z_h$, hence distinct observations are never collapsed to the same value, making \ref{eq:z} suitable for dimension reduction and clustering.

\subsection{Compression to $q$ dimensions}

For large $p$, a one-dimensional compression may be too coarse, and representation in $q \geq 1$ dimensions may be preferable and preserve additional information for clustering. 
To obtain a $q$-dimensional representation, the variables in $\mathbf{Y}$ are partitioned into $q \ll p$ disjoint blocks $B_s$ of sizes $p_s$, $s = 1, \ldots, q$, with $p_s \approx \lfloor p/q \rfloor$. The corresponding submatrices $\mathbf{Y}_s$ are then compressed independently as
\begin{equation}
z_{is}
=
\sum_{j \in B_s}
K^{\frac{p_s-j}{p_sK}} y_{ij},
\qquad
i=1,\ldots,n,\quad s=1,\ldots,q,
\label{eq:zblock}
\end{equation}
Stacking the compressed variables as $\mathbf{z}_i = (z_{i1},\ldots,z_{iq})^\top$, yields the compressed data matrix $\mathbf{Z}\in\mathbb{R}^{n\times q}$.
This blockwise construction preserves the same weighted-sum structure used in the one-dimensional compression, while allowing the information in $\mathbf{Y}$ to be distributed across several compressed coordinates. The resulting representation can therefore be used both for visualization ($q=2,3)$, and for clustering.

\begin{proposition}
\label{distanze}
Let $\mathbf y_i$,$\mathbf y_h\in\{0,\ldots,K-1\}^p$, and let $\mathbf z_i,\mathbf z_h\in\mathbb R^q$ be their compressed representations defined by \eqref{eq:zblock}. Define the distances:
\[
d_{ih} = ||\mathbf y_i-\mathbf y_h||_1,  \qquad
d_{ih}^{*} = ||\mathbf z_i-\mathbf z_h||_2 .
\]
Then
\[
d_{ih}^{*}  \leq K^{\frac{1}{K}} d_{ih} \leq 3^{\frac{1}{3}} d_{ih}
\]

\end{proposition}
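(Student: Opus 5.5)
The plan is to bound each compressed coordinate difference separately, then combine the coordinates using $\|\cdot\|_2\le\|\cdot\|_1$ on $\mathbb R^q$. Write the weight attached to variable $j\in B_s$ as $w_{j}=K^{\frac{p_s-j}{p_sK}}$, where $j$ is read as its position within the block $B_s$, so that $j\in\{1,\ldots,p_s\}$. By linearity of \eqref{eq:zblock},
\[
z_{is}-z_{hs}=\sum_{j\in B_s} w_j\,(y_{ij}-y_{hj}),
\]
and the triangle inequality gives $|z_{is}-z_{hs}|\le \big(\max_{j\in B_s} w_j\big)\sum_{j\in B_s}|y_{ij}-y_{hj}|$.

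The key step is a uniform bound on the weights. The exponent is $\frac{p_s-j}{p_sK}$, and since $1\le j\le p_s$ it lies in $[0,\tfrac{1}{K})$. Because $K\ge 2$, the function $x\mapsto K^x$ is increasing, so $w_j< K^{1/K}$; this is exactly the bound on the coefficients already noted after \eqref{eq:z}. This is the only point needing care. If the index $j$ in \eqref{eq:zblock} were instead the global column index, the exponent could become negative. In that case the weights would be smaller still, so the upper bound $K^{1/K}$ survives either way. I will state that the weights are bounded above by $K^{1/K}$ under both readings.

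Next I combine the coordinates. Since the blocks $B_1,\ldots,B_q$ partition $\{1,\ldots,p\}$,
\[
d^*_{ih}=\|\mathbf z_i-\mathbf z_h\|_2\le\|\mathbf z_i-\mathbf z_h\|_1=\sum_{s=1}^q|z_{is}-z_{hs}|\le K^{1/K}\sum_{s=1}^q\sum_{j\in B_s}|y_{ij}-y_{hj}|=K^{1/K}d_{ih}.
\]

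Finally, the second inequality $K^{1/K}\le 3^{1/3}$ follows by maximizing $f(x)=x^{1/x}$ over the integers $x\ge2$. We have $\log f(x)=\log x/x$ with derivative $(1-\log x)/x^2$, so $f$ is increasing on $[2,e]$ and decreasing on $[e,\infty)$. Hence the integer maximum is attained at $2$ or $3$, and comparing $2^{1/2}=8^{1/6}<9^{1/6}=3^{1/3}$ shows the maximum is $3^{1/3}$, as already asserted in the paper. I do not expect a real obstacle here; the only subtlety is the indexing convention for $j$ within blocks, which is handled above.
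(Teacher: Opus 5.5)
Your proof is correct and follows essentially the same route as the paper's: a triangle-inequality bound on each block coordinate using the uniform weight bound $K^{(p_s-j)/(p_sK)}<K^{1/K}$, combined across blocks via $\|\cdot\|_2\le\|\cdot\|_1$. The paper applies the norm comparison after the coordinatewise bound rather than before, which is immaterial. Your remarks on the within-block indexing convention and your check that $\max_{K\ge 2}K^{1/K}=3^{1/3}$ fill in details the paper only asserts.
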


Proof of Proposition \ref{distanze} is provided in the Appendix. The bound shows that the mapping cannot inflate pairwise distances by more than a fixed factor, hence observations that are close in the original space remain close after compression. The compressed representation therefore largely preserves the local distance geometry of the original data $\bfY$, and the example below illustrates that neighborhood structure is indeed broadly retained in practice.

\subsubsection{Illustrative example}
To illustrate how the proposed compression represents binary vectors with different patterns of zeros and ones, consider six binary row vectors of length $p$ (even) that differ in the number and the positions of their nonzero entries: $a = \mathbf{0}_p$, $b = \mathbf{1}_p$, $c = (\mathbf{0}_{p/2}, \mathbf{1}_{p/2})$, $d = (\mathbf{1}_{p/2}, \mathbf{0}_{p/2})$, $e = (\mathbf{0}_{p-2}, 1, 1)$, and $f = (1, 1, \mathbf{0}_{p-2})$. 

\begin{figure}[!t]
    \centering
    \includegraphics[scale=.35]{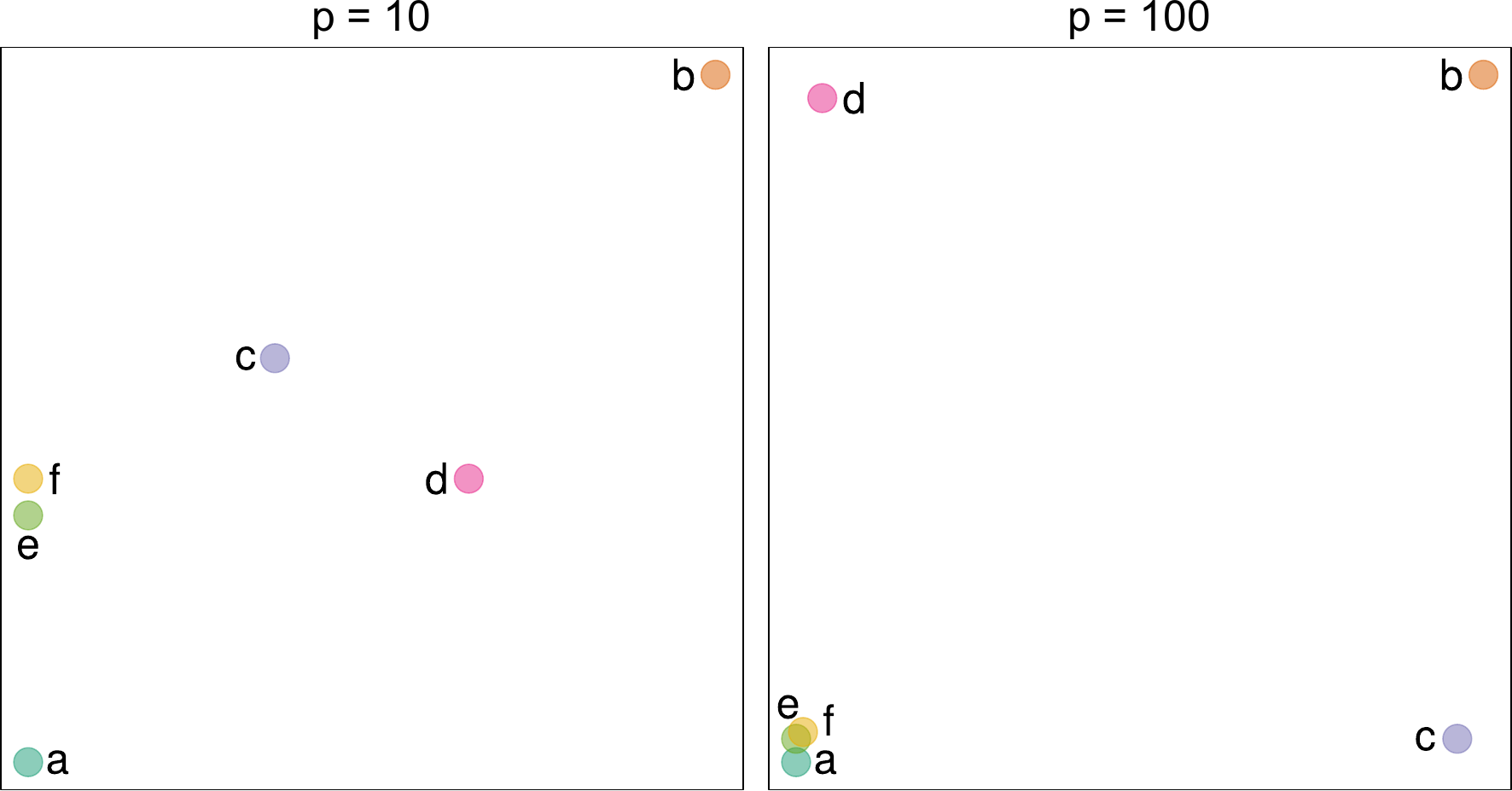}
    \caption{\label{fig:esempio_comp} Compressed representations of $a$, $b$, $c$, $d$, $e$, $f$ for $p=10$ and $p=100$.}
\end{figure}

Figure~\ref{fig:esempio_comp} displays their two-dimensional compressed representations for $p=10$ and $p=100$. Since the coordinates in \eqref{eq:zblock} depend on both $p$ and the positions of the nonzero entries, the relative positions of the six vectors in the compressed space vary with $p$. Larger $p$ accentuates the separation between vectors with substantially different patterns, such as $b$, $c$, and $d$, while vectors differing in only a few entries, such as $a$, $e$, and $f$, remain close.

To compare distance structures, we compute Hamming distances on the original binary vectors and Euclidean distances on their corresponding two-dimensional compressed representations. For each vector, we rank the distances to all others in increasing order. Let $A$ and $B$ collect the row-wise rankings under the Hamming and compressed-space Euclidean distances, respectively:
{\footnotesize
\[
A = \begin{array}{c|cccccc}
 & a & b & c & d & e & f \\
\hline
a & - & 3 & 2 & 2 & 1 & 1 \\
b & 3 & - & 1 & 1 & 2 & 2 \\
c & 2 & 2 & - & 4 & 1 & 3 \\
d & 2 & 2 & 4 & - & 3 & 1 \\
e & 1 & 5 & 3 & 4 & - & 2 \\
f & 1 & 5 & 4 & 3 & 2 & - \\
\end{array}
\qquad
B = \begin{array}{c|cccccc}
 & a & b & c & d & e & f \\
\hline
a & - & 5 & 3 & 4 & 1 & 2 \\
b & 5 & - & 2 & 1 & 4 & 3 \\
c & 3 & 4 & - & 5 & 1 & 2 \\
d & 4 & 3 & 5 & - & 2 & 1 \\
e & 2 & 5 & 4 & 3 & - & 1 \\
f & 2 & 5 & 4 & 3 & 1 & - \\
\end{array}
\]
}

Under the Hamming distance, distinct configurations can be equidistant from a reference vector. For example, $a$ is tied between $c$ and $d$, and between $e$ $f$, because the Hamming distance counts only the number of differing positions, not where they occur. The compression instead weights positions differently, hence the Euclidean distances break some of these ties, yielding rankings with no ties in this example. At the same time, the broad structure is retained: the farthest vector from each reference is the same under both rankings, and nearest-neighbor relationships are broadly consistent. The compression thus refines distance relationships through positional information while preserving the main neighborhood structure of the original data.

Similar considerations apply to count data with $K>2$. There, dissimilarities such as the Bray--Curtis dissimilarity \citep{chen:2025} summarize differences in the original high-dimensional space, whereas the proposed compression provides a low-dimensional representation that preserves broad separation patterns while reducing ties and redundancies induced by discrete observations.

\subsubsection{Approximate distribution of the compressed variables}

Each $z_{is}$ is a linear combination of $p_s$ discrete random variables with deterministic coefficients. The following remark gives its approximate distribution.

\begin{remark}
\label{rem:dist}
If the variables in $\mathbf{Y}_s$, $s=1,\dots, q$, are independent, each with finite mean $\mu_{j,s}$ and variance $\sigma_{j,s}^2$, then, under the Lindeberg condition, 
\begin{equation}
z_{is} \overset{\text{approx}}{\sim} 
\mathcal{N} \left(
\sum_{j \in B_s} K^{\frac{p_s-j}{p_sK}}\mu_{j,s},
\sum_{j \in B_s} K^{\frac{2(p_s-j)}{p_sK}}\sigma_{j,s}^2
\right).
\label{eq:zdistribution}
\end{equation}
\end{remark}

Although the original variables are discrete, each compressed variable is a weighted sum of many discrete variables and may therefore behave approximately as a continuous Gaussian variable when the block size is sufficiently large. The applicability of the Central Limit Theorem is further supported by the fact that the coefficients $K^{\frac{p_s-j}{p_sK}}$ in \eqref{eq:z} remain uniformly bounded and vary smoothly, rather than growing exponentially as in \eqref{eq:zstar}. Under the independence assumptions of Remark~\ref{rem:dist}, this motivates the use of standard clustering methods in the compressed space. More generally, if the original observations arise from a finite mixture of discrete distributions, \eqref{eq:zdistribution} suggests that the compressed data can retain a lower-dimensional representation of the original mixture structure, as discussed next.

\subsection{Clustering in the compressed data space}
\label{sec:clust_compression_space}

Suppose that the rows of $\mathbf{Y}$ arise from a finite mixture with $G$ components, each component corresponding to a cluster. Let $c_{ig}$ be the latent cluster membership indicator, with $c_{ig}=1$ if observation $i$ belongs to cluster $g$ and $c_{ig}=0$ otherwise. Conditionally on cluster membership, $\mathbf{y}_i \mid (c_{ig}=1) \sim p_g(\bm{\theta}_g)$, where $p_g(\bm{\theta}_g)$ is a component-specific discrete distribution with parameter vector $\bm{\theta}_g$. We denote by $\bm{\mu}_g = \mathbb{E}[\mathbf{y}_i \mid c_{ig}=1]$, $\bm{\sigma}_g^2 = \mathbb{V}\text{ar}[\mathbf{y}_i \mid c_{ig}=1]$ the corresponding component-specific marginal mean and variance parameters, $\bm{\mu}_g=(\mu_{1g},\ldots, \mu_{jg}, \ldots \mu_{pg})$, $\bm{\sigma}^2_g=(\sigma_{1g}^2,\ldots, \sigma_{jg}^2, \ldots \sigma_{pg}^2)$.

Under the conditions of Remark~\ref{rem:dist}, the compressed variables can be interpreted as arising from a lower-dimensional version of the original mixture. Conditionally on cluster membership,
\begin{equation}
z_{is}\mid(c_{ig}=1)  \overset{\text{approx}}{\thicksim} \mathcal{N} \left(
\sum_{j \in B_s}K^{\frac{p_s-j}{p_sK}}\mu_{gj,s}\,,\,
\sum_{j \in B_s}K^{\frac{2(p_s-j)}{p_sK}}\sigma_{gj,s}^2
\right),
\label{eq:zclust}
\end{equation}
where $\mu_{gj,s}$ and $\sigma_{gj,s}^2$ are the mean and variance of the $j$th variable in block $s$ for component $g$, with $s=1,\ldots,q$. Thus a finite mixture of discrete distributions in the original space maps to an approximate finite mixture in the compressed space. This approximation motivates model-based clustering  of $\mathbf{Z}$. Gaussian mixture models are natural candidates, since \eqref{eq:zclust} suggests an approximate Gaussian form for the compressed variables within each cluster. 

The compression is also compatible with distance-based methods such as k-means, because it preserves separation between observations and between cluster centroids. The following proposition formalizes this latter point for $q=1$; the argument extends to $q > 1$ blockwise. Define the cluster centroids in the original space as $\mathbf{m}_g = \frac{1}{n_g} \sum_{i}c_{ig} \mathbf{y}_{i}$, where $n_g = \sum_i c_{ig}$; the associated compressed data cluster centroids are $u_g = \frac{1}{n_g} \sum_{i}c_{ig} z_{i}$. 

\begin{proposition}
\label{prop:medie}
If the cluster centroids ${\mathbf{m}}_1,\ldots, \mathbf{m}_g,\ldots,{\mathbf{m}}_G$ in the original space are distinct, i.e. any two centroids differ in at least one element, then the associated compressed centroids ${u}_1,\ldots, u_g,\ldots,u_G$ are also distinct.
\end{proposition}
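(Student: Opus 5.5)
The plan is to reduce the statement to a linear-independence property of the weights $w_j = K^{\frac{p-j}{pK}}$, $j=1,\ldots,p$. Presumably this is the same property that underlies the injectivity in Proposition~\ref{prop:1}, but I will not rely on that and will establish it directly in the third step. By linearity of \eqref{eq:z},
\[
u_g=\frac{1}{n_g}\sum_i c_{ig}\sum_{j=1}^p w_j y_{ij}=\sum_{j=1}^p w_j m_{gj},
\]
so for $g\neq h$ we have $u_g-u_h=\sum_j w_j (m_{gj}-m_{hj})$.

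Proposition~\ref{prop:1} cannot be applied verbatim. The centroids are not integer vectors, and a linear functional on $\mathbb{R}^p$ cannot be injective on all of $\mathbb{R}^p$. The key observation is that centroids are \emph{rational}: $m_{gj}\in\frac{1}{n_g}\mathbb{Z}$. Set $d_j = n_g n_h (m_{gj}-m_{hj})$. Then $d_j\in\mathbb{Z}$, at least one $d_j\neq 0$ by assumption, and $u_g=u_h$ would force $\sum_j d_j w_j=0$.

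So it suffices to show that $1,\alpha,\ldots,\alpha^{p-1}$ are linearly independent over $\mathbb{Q}$, where $\alpha = K^{1/(pK)}$ and $w_j=\alpha^{p-j}$. Equivalently, the degree of $\alpha$ over $\mathbb{Q}$ must be at least $p$. I would argue as follows.
\begin{enumerate}
\item Write $K=b^e$ with $b\ge 2$ not a perfect power.
\item Then $\alpha=b^{e/(pK)}$. Reducing the exponent gives $e/(pK)=a/m$ with $m = pK/\gcd(e,pK)\ge pK/e$.
\item Since $b$ is not a perfect power, $x^m-b^a$ with $\gcd(a,m)=1$ is irreducible over $\mathbb{Q}$. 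This is the classical Capelli/Vahlen criterion, or more elementarily the standard fact that $[\mathbb{Q}(b^{1/m}):\mathbb{Q}]=m$ for such $b$. Hence $\alpha$ has degree $m$.
\item Finally, $K=b^e\ge 2^e>e$, so $m\ge pK/e>p$.
\end{enumerate}
Thus no nonzero integer combination of $\alpha^0,\ldots,\alpha^{p-1}$ vanishes, which contradicts $\sum_j d_j w_j=0$. Hence $u_g\neq u_h$.

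The main obstacle is this degree computation, in particular the irreducibility claim when $K$ is itself a perfect power (e.g.\ $K=4,8,9$), where Eisenstein does not apply directly. I would try first to cite the known result on the degree of real radicals $b^{a/m}$ with $b$ not a perfect power. If a fully elementary route is needed, I would use the norm argument: the minimal polynomial of $\alpha$ divides $x^m-b^a$, so its constant term is, up to sign, a product of conjugates of $\alpha$, hence $\pm b^{ak/m}$ for its degree $k$. Rationality of this constant term forces $m\mid ak$ and therefore $m\mid k$, so $k=m$.

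For $q>1$ the same argument applies blockwise to \eqref{eq:zblock}. Two distinct centroids differ in some coordinate $j$, which lies in some block $B_s$, and the $s$-th compressed coordinates of $u_g$ and $u_h$ then differ by the argument above with $p$ replaced by $p_s$. This requires the block weights to be $p_s$ distinct powers $\alpha_s^{0},\ldots,\alpha_s^{p_s-1}$ of $\alpha_s=K^{1/(p_sK)}$, i.e.\ the exponent in \eqref{eq:zblock} is read with $j$ the within-block index $1,\ldots,p_s$. As $m_s\ge p_sK/e>p_s$, this case needs no new work.
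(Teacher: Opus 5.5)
Your proof is correct and follows the same route as the paper's. Both expand $u_g-u_l=\sum_{j}K^{\frac{p-j}{pK}}(m_{gj}-m_{lj})$ and use, as in the proof of Proposition~\ref{prop:1}, that $t=K^{1/(pK)}$ has degree at least $p$ over $\mathbb{Q}$, so no nonzero rational combination of $1,t,\ldots,t^{p-1}$ vanishes. The only difference is that you prove this degree bound (writing $K=b^e$ and obtaining degree $m\ge pK/e>p$, including the perfect-power case), whereas the paper asserts it with a citation.
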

Proof to  ~\ref{prop:medie} is in the Appendix. Distinct cluster centroids in the original data space thus remain distinct after compression. Hence, when the clustering structure is driven primarily by differences in cluster location, the compression preserves the separation between cluster positions. For $q>1$, the same argument applies blockwise: if two original centroids differ in at least one variable, then they differ in the compressed coordinate of the block containing it.

The next remark quantifies how much separation is retained after compression, showing how the signal-to-noise ratio governing cluster separation scales with the block size.

\begin{remark}
\label{rem:medie_conv}
Consider the standardized difference between two compressed coordinates of two observations belonging to clusters $g$ and $l$. Under independence and uniformly bounded fourth moments, as $p_s \rightarrow \infty$ with $\frac{p}{n} \rightarrow c$, the Lindeberg Central Limit Theorem gives:
\[
\frac{\left(z_{is} - z_{hs}\right)  }{\sqrt{\mathbb{VAR}\left(z_{is} \mid c_{ig} = 1\right) + \mathbb{VAR}\left(z_{hs} \mid c_{il} = 1\right)}} \xrightarrow[]{d} \mathcal{N} \left(\delta_{gls}, 1 \right)
\]
where 
\[
\delta_{gls} = 
\frac{ \sum_{j \in B_s} K^{\frac{p_s-j}{Kp_s}} \left( m_{gjs}  -  m_{ljs} \right)}{\sqrt{\mathbb{VAR}\left(z_{is} \mid c_{ig} = 1\right) + \mathbb{VAR}\left(z_{hs} \mid c_{il} = 1\right)}} ,
\]
for $s = 1, \dots, q$ and $g,l = 1, \dots, G$. 
 
\end{remark}
Hence, the cluster separation signal $\delta_{gs}$ increases with the block size $p_s$, provided differences between cluster means accumulate across variables. This suggests that clustering performance in the compressed data may improve as dimension increases,  in contrast to distance-based methods that deteriorate under the curse of dimensionality. The remark also highlights a bias-variance trade-off in the choice of $q$: $q=1$ yields the strongest Gaussian approximation through maximal aggregation, while $q>1$ spreads the clustering signal across coordinates, which can help methods that exploit multidimensional geometry.

Once $\mathbf{Z}$ is obtained, clustering can be performed directly in the compressed space. In principle, any clustering method can be applied to $\mathbf{Z}$. In this work, we focus on two popular complementary choices, aligned with \eqref{eq:zclust} and Proposition~\ref{prop:medie}: k-means and Gaussian mixture models.

K-means \citep{bishop:2006} is a simple, efficient centroid-based method, appropriate when the compressed clusters are well separated and approximately spherical;. Under this approach, selection of the number of cluster is implemented by maximizing the average silhouette width over a prespecified range of candidate values \citep{rousseeuw:1987}. Other criteria could also be used for this purpose, as for example the gap statistic \citep{tibshirani:2001}. We use the silhouette for its simplicity and demonstrated performance in a variety of contexts \citep{arbelaitz:2013,batool:2021}.

Model-based clustering based on finite Gaussian mixtures \citep{bouveyron:2019} is instead motivated by the approximate distribution in \eqref{eq:zclust}. We fit Gaussian mixtures using the R package \texttt{mclust} \citep{mclust}, which supports a range of covariance parameterizations. Model selection and selection of the number of clusters is implemented using the Bayesian information criterion \citep[BIC,][]{fraley:2002}. Compared with k-means, this approach can accommodate clusters with unequal volume, orientation, or dispersion in the compressed space.

The two methods use the compression differently: k-means treats it as a distance-preserving embedding for centroid-based clustering, while Gaussian mixtures exploit the approximate Gaussian mixture structure it induces. The simulation study in Section 3 evaluates both.

\subsection{Variable ordering}
\label{sec:var_order}
The coefficients in \eqref{eq:z} and \eqref{eq:zblock} depend on the ordering of the variables in the data matrix. Injectivity and centroid separation are invariant to this ordering, but the dispersion of the compressed variables is not.

Let $\pi_s$ be a permutation of $\{1,\ldots,p_s\}$, with $\pi_s(j)$ the index of the variable assigned to position $j$ in block $s$. The ordering of the variables affects the coefficients assigned to them in the compressed representation. Writing $a_s = K^{\frac{2}{p_sK}}$, assuming that variables are independent within each block, from \eqref{eq:zdistribution} the variance of the compressed variable associated with block $s$ under ordering $\pi_s$ is:
\begin{equation}
\sigma_{\pi_s,s}^2
=
\sum_{j \in B_s}
a_s^{p_s-j}
\sigma^2_{s,\pi_s(j)}
=
\sum_{j \in B_s}
K^{\frac{2(p_s-j)}{p_sK}}
\sigma^2_{s,\pi_s(j)}.
\label{eq:block_variance_ordering}
\end{equation}
Equivalently, this expression can be written in Horner form \citep{borwein:1995} as: $\sigma_{\pi_s,s}^2 =
\sigma^2_{s,\pi_s(p_s)}
+
a_s\left(
\sigma^2_{s,\pi_s(p_s-1)}
+\cdots+
a_s\left(
\sigma^2_{s,\pi_s(2)}
+
a_s\sigma^2_{s,\pi_s(1)}
\right)
\right)$. 

Expression \eqref{eq:block_variance_ordering} shows that different permutations of the variable indices within a block can lead to different levels of variability in the compressed representation. Since $a_s \geq 1$, the weights 
$a_s^{p_s-1}, a_s^{p_s-2}, \ldots, a_s, 1$ are nonincreasing. By the rearrangement inequality,  \eqref{eq:block_variance_ordering} is maximized 
by assigning the largest variance to the largest weight, the second largest variance to the second largest weight, and so on. Hence the maximizing ordering satisfies $\sigma^2_{s,\pi_s(1)}
\geq
\sigma^2_{s,\pi_s(2)}
\geq
\cdots
\geq
\sigma^2_{s,\pi_s(p_s)}$.
We therefore order the variables by decreasing sample variance before compression. This increases the contribution of high-variance variables, which is consistent with the expectation of greater heterogeneity when a clustering structure is present.

This argument assumes within-block independence. Under correlation, (6) acquires covariance terms and the variance-based ordering need not be optimal. A simulation study in the Appendix assesses it under correlated count data and finds that, while the compressed representations are largely stable across orderings, decreasing-variance ordering gives consistent gains in clustering performance.

\section{Simulation study}
\label{sec:sim}

We conduct a comprehensive simulation study to evaluate the clustering performance of the proposed compression method for high-dimensional discrete data, defined in \eqref{eq:z} and hereafter denoted by DC. Performance is assessed across a range of sample sizes, dimensions, cluster structures, noise levels, and sparsity patterns. These settings mimic realistic high-dimensional discrete-data applications. Unless otherwise stated, DC reduces $\bfY \in \mathbb{N}^{n\times p}$ to two or five dimensions.

We compare DC with principal component analysis \citep[PCA,][]{jolliffe:2002}, t-distributed stochastic neighbor embedding \citep[t-SNE,][]{maaten:2008}, multidimensional scaling \citep[MDS,][]{borg:2005}, spectral clustering\citep[SC,][]{luxburg:2007}, and k-means (KM). Details on implementations are in the Appendix. 

For DC, PCA, t-SNE, and MDS, clustering is performed on the reduced representation using either Gaussian mixture modelling via \texttt{mclust} (MC) or k-means. We denote the combinations accordingly: MDS+MC is MDS followed by Gaussian mixture clustering, whereas DC+KM(5) is compression to five dimensions followed by k-means, and so on. This design separates the effects of dimension reduction and clustering. Performance is measured using the adjusted Rand index \citep[ARI,][]{hubert:1985}.

\subsection{Scenario 1: Independent count data}
\label{sec:sim:1}

We first consider independent Poisson data $y_{ij}\mid z_i=g\sim\text{Pois}(\lambda_g)$, with $n\in \{50,100,200 \}$, $p \in \{100,500,1000,2000,5000 \}$, and $G \in \{2,5 \}$. For $G=2$, $\bm{\lambda}=(3,4)$; for $G=5$, $\bm{\lambda}=(1,2,3,4,5)$. Variables are independent within clusters and 100 datasets are generated for each configuration.
\begin{figure}[!t]
    \includegraphics[scale=.56 ]{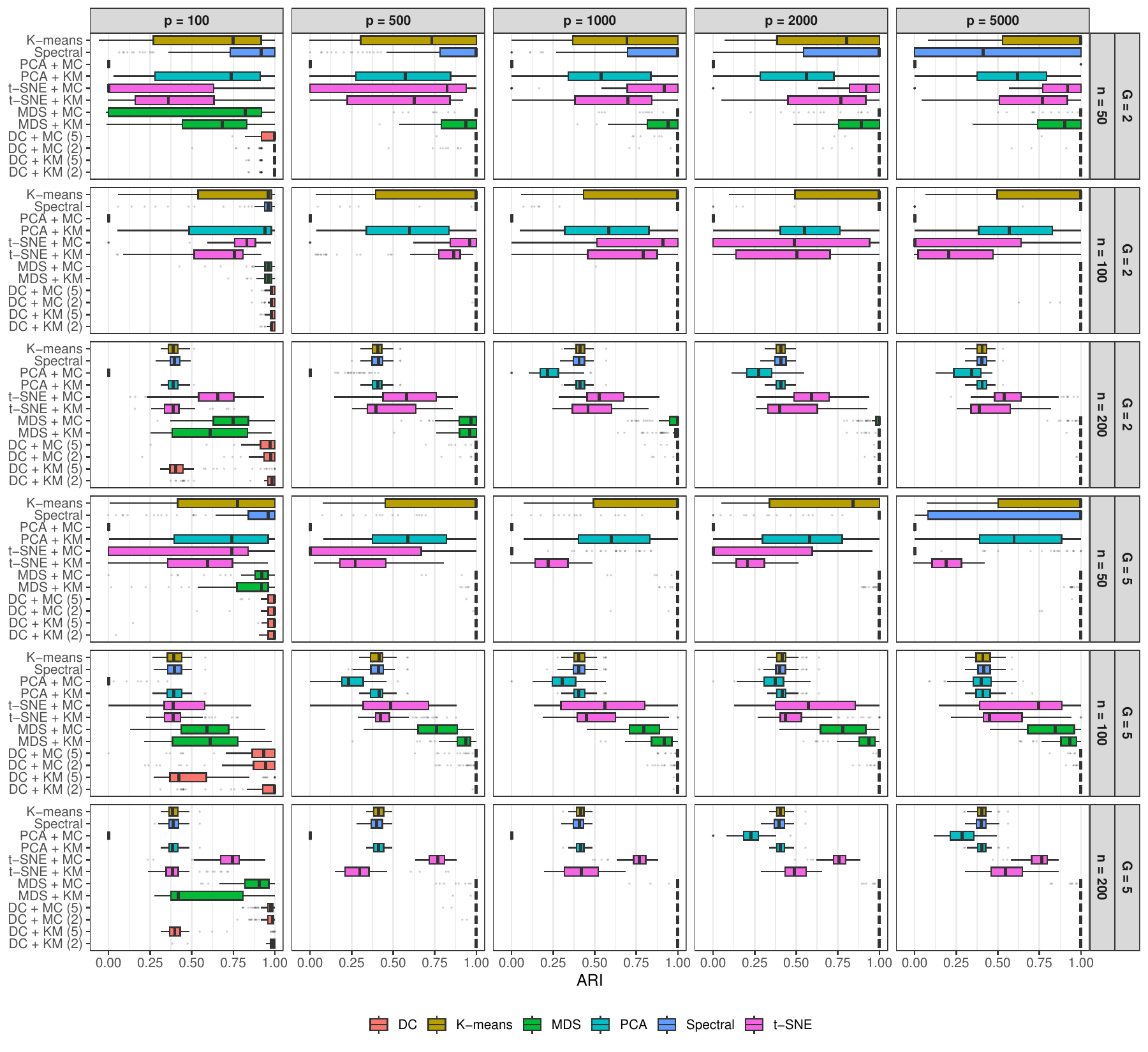}
    \caption{\label{fig:ari_sim1} Simulations, Scenario 1. ARI values across methods.}
\end{figure}
Figure~\ref{fig:ari_sim1} shows that DC-based clustering achieves near-perfect recovery across almost all settings, with little sensitivity to the number of compressed dimensions and similar results for k-means and \texttt{mclust}. Among competing approaches, SC and MDS perform best but exhibit greater variability. Representative projections in the Appendix show clear cluster separation under DC, whereas separation is weaker for PCA, t-SNE, and MDS. To note that for PCA, the median number of retained components is 65 (range 28--152).
\begin{table}[tb!]
\centering
\begin{tabular}{lrrrrrr}
\toprule
\em Method comparison & \em Min. & \em 1st Qu. & \em Median & \em Mean & \em 3rd Qu. & \em Max. \\
\midrule
MDS over DC(2)    &  1.71 &  5.37 &  11.02 &  14.65 &  21.75 &   70.87 \\
MDS over DC(5)    &  1.83 &  5.82 &  12.43 &  16.19 &  24.86 &   72.51 \\
PCA over DC(2)    &  1.86 &  8.44 &  16.65 &  20.12 &  28.93 &   84.61 \\
PCA over DC(5)    &  2.74 &  9.38 &  18.85 &  22.46 &  32.70 &   91.02 \\
t-SNE over DC(2)  & 12.27 & 50.61 &  94.57 & 181.68 & 225.59 & 1168.59 \\
t-SNE over DC(5)  & 13.07 & 56.64 & 109.97 & 193.66 & 242.64 & 1224.28 \\
\bottomrule
\end{tabular}
  \caption{ \label{tab:times_red_sim1} Simulations, Scenario 1. Ratios of computational times for competing dimensionality-reduction methods relative to DC.}
\end{table}
DC also yelds substantial computational savings. Fore example, relative to DC(2), MDS is approximately 14 times slower on average, PCA 20 times slower, and t-SNE 181 times slower (Table~\ref{tab:times_red_sim1}), reflecting the underlying computational complexities. Including clustering time, DC yields average speedups of approximately 9-fold over spectral clustering and 50-fold over k-means. See Appendix for additional results.

\subsection{Scenario 2: Correlated count data}
\label{sec:sim:2}

To assess robustness to violations of the independence assumption in Remark~\ref{rem:dist}, we generate cluster-specific multivariate Poisson data using \texttt{simstudy} \citep{simstudy}. We fix $G=3$ with $\bm{\lambda}=(1,2,3)$ and consider moderate correlations $(-0.5,0.5)$ and high correlations $(-1,1)$, with $n\in\{50,100,200\}$ and $p\in\{100,500\}$.
\begin{figure}[t]
\centering
\includegraphics[scale=.45]{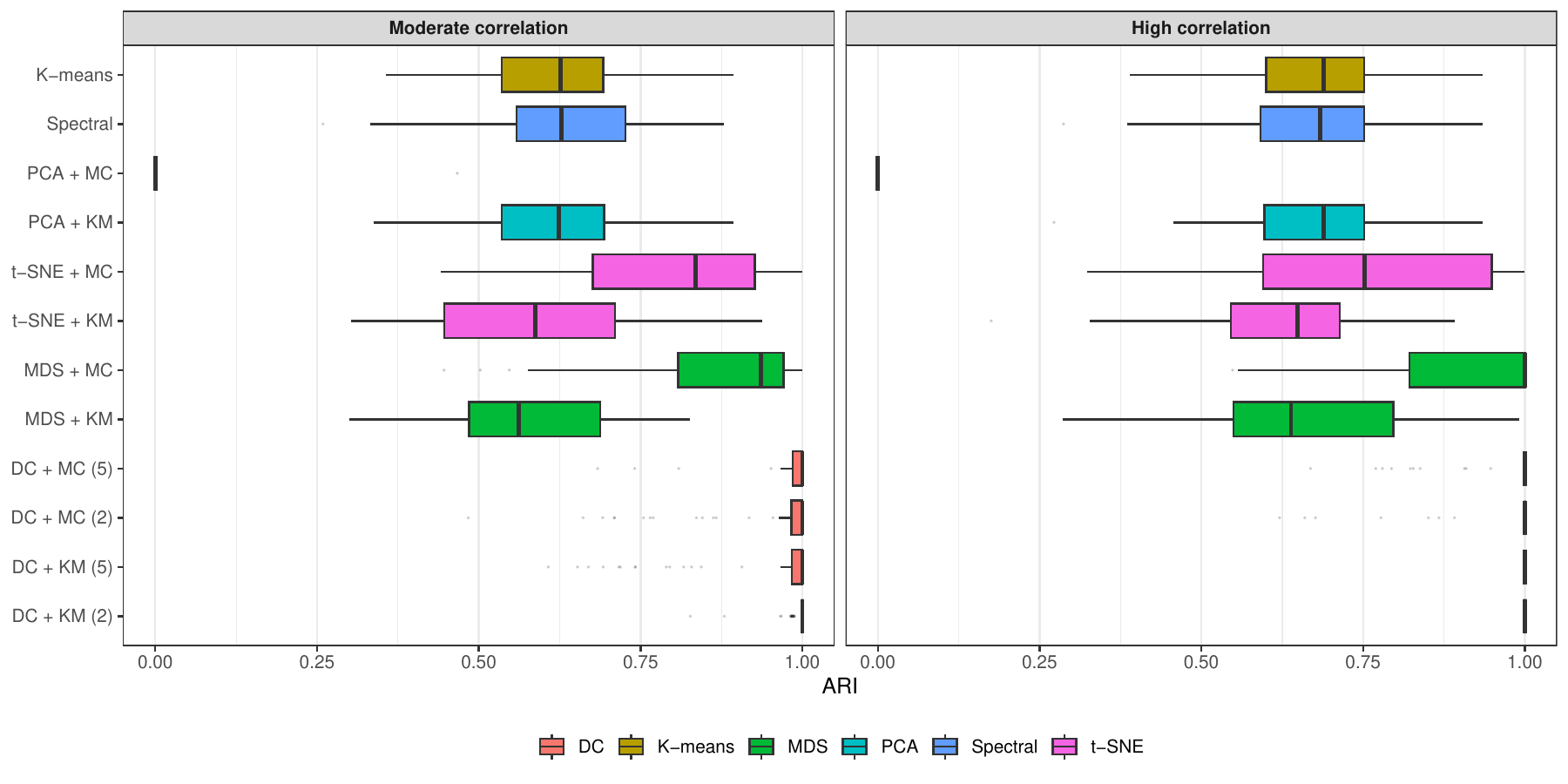}
\caption{\label{fig:sim2_paper} Simulations, Scenario 2. ARI values across methods ($n=100$, $p=500$).}
\end{figure}

Figure~\ref{fig:sim2_paper} shows that DC-based clustering continues to achieve near-perfect recovery under both correlation regimes, with k-means and \texttt{mclust} giving comparable results. Competing methods are less consistent, with MDS and t-SNE providing the strongest alternatives. Similar patterns are observed across all $(n,p)$ combinations; see Appendix.  A representative projection there show that t-SNE
and MDS capture part of the cluster structure, whereas PCA (retaining a median of 39 components, range 21–173), tends to place observations from different clusters close together.

\subsection{Scenario 3: Binary data}
\label{sec:sim:3}

We next consider binary observations from cluster-specific Bernoulli distributions, with $G=3$, $n\in\{50,100,200\}$, and $p\in\{100,500\}$. Two probability configurations are examined: $\bm{\pi}=(0.3,0.5,0.7)$, corresponding to well-separated clusters, and $\bm{\pi}=(0.3,0.4,0.5)$, corresponding to weaker separation. In this scenario, the Hamming distance is used in input to MDS. 
\begin{figure}[t]
\centering
\includegraphics[scale=.45]{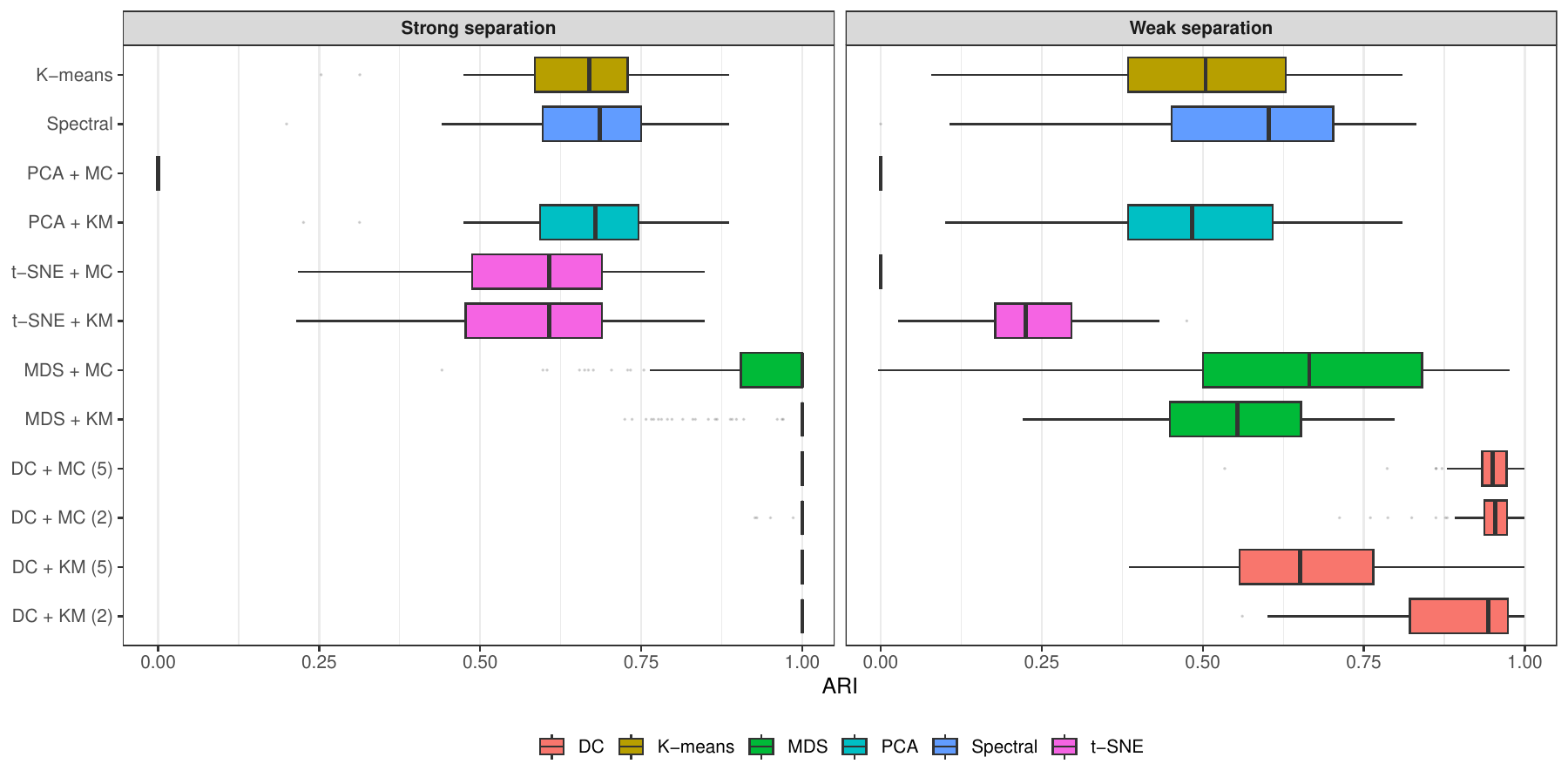}
\caption{Simulations, Scenario 3. ARI values across methods ($n=100$, $p=500$).}
\label{fig:sim3_paper}
\end{figure}

Figure~\ref{fig:sim3_paper} shows that DC-based clustering consistently outperforms competitors. As expected, performance improves with stronger cluster separation, although all methods deteriorate when separation is weak, though the decline is less pronounced as dimension increases. Full results are reported in the Appendix.

\subsection{Scenario 4: Count data with noise}
\label{sec:sim:4}

To evaluate robustness to irrelevant features, data are generated as in Scenario~1 with $G=3$, $\bm{\lambda}=(1,2,3)$, adding a fraction $r\in\{0.2,0.5,0.8\}$ of noise variables generated independently from $\text{Pois}(1.5)$ and unrelated to cluster membership.
\begin{figure}[th]
\centering
\includegraphics[scale=.45]{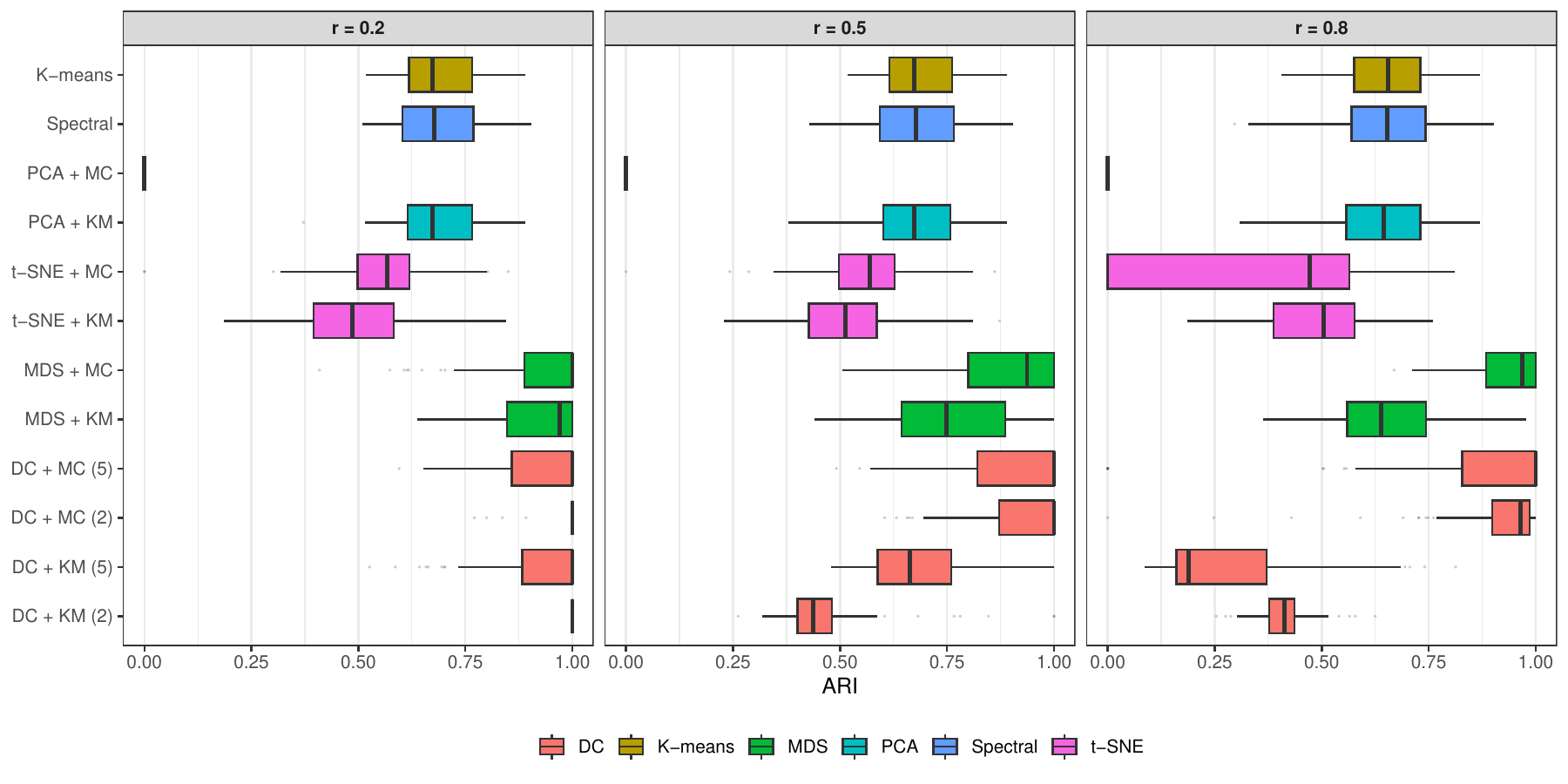}
\caption{\label{fig:sim4_paper} Simulations, Scenario 4. ARI values across methods ($n=100$, $p=500$).}
\end{figure}

Figure~\ref{fig:sim4_paper} shows that DC-based clustering remains the strongest performer, although accuracy declines as the noise proportion increases. In this setting, DC+MC consistently outperforms DC+KM, likely because noise inflates within-cluster variability and induces unequal dispersions that Gaussian mixtures accommodate better than k-means. MDS clustering is the closest competitor, while PCA performs poorly. As noise increases, PCA retains more components (median 70; range 21--173), suggesting that additional dimensions largely capture noise. Representative projections are in the Appendix.

\subsection{Scenario 5: Sparse data}
\label{sec:sim:5}

Finally, we assess performance under sparsity. Data are generated as in Scenario~2 with $G=3$ and $\bm{\lambda}=(1,2,3)$, after which a proportion $r\in{0.4,0.7,0.9}$ of entries is randomly set to zero, inducing unstructured sparsity unrelated to cluster membership.
\begin{figure}[t]
\centering
\includegraphics[scale=.45]{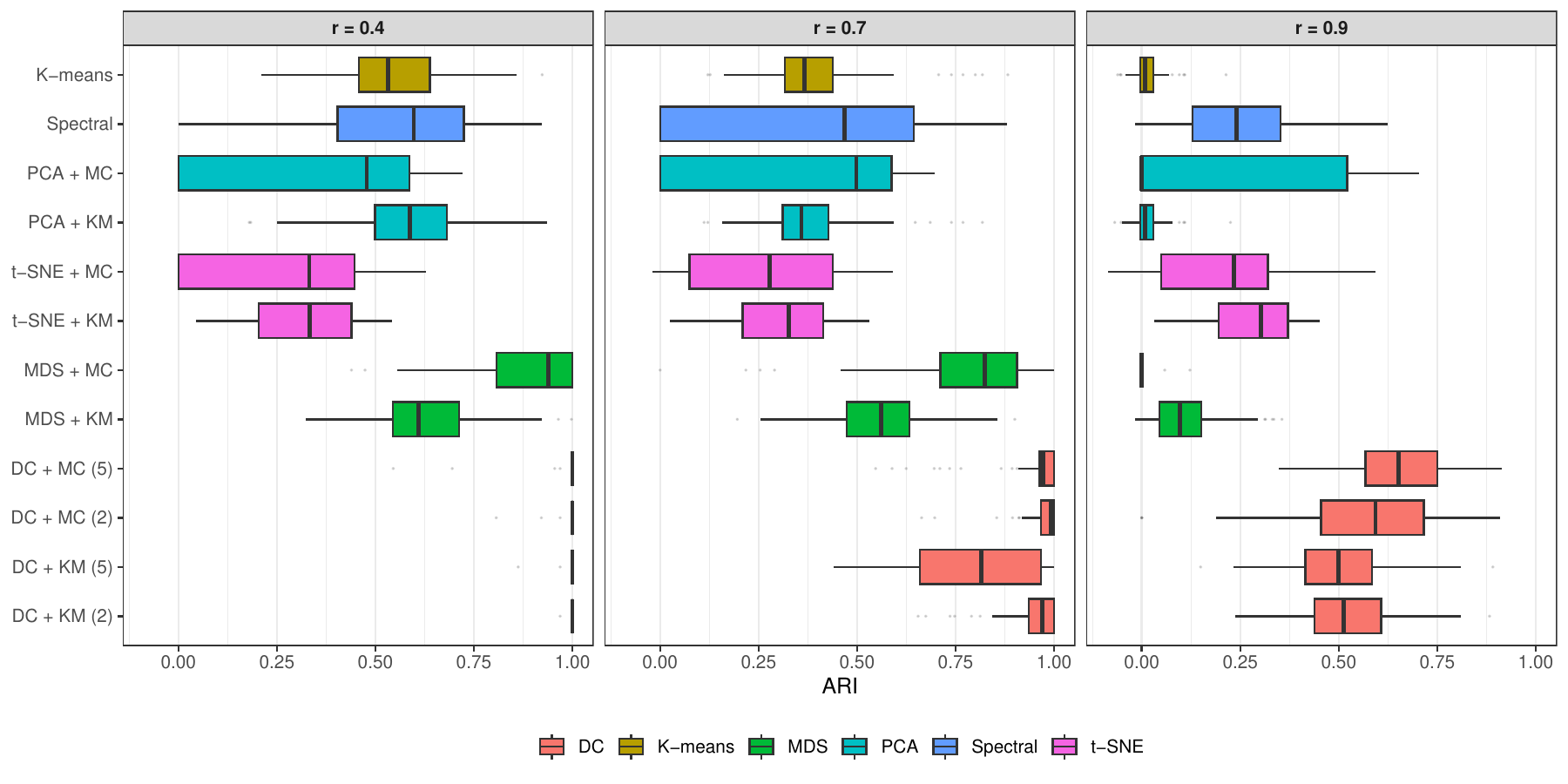}
\caption{\label{fig:sim5_paper} Simulations, Scenario 5. ARI values across methods ($n=100$, $p=500$).}
\end{figure}

Figure~\ref{fig:sim5_paper} show that DC-based clustering consistently outperforms competing methods, with performance declining for all approaches as sparsity increases. As in Scenario~4, \texttt{mclust} generally outperforms k-means in the compressed space, likely because sparsity induces uneven dispersion and non-spherical clusters. MDS is again the strongest alternative but is less stable across sparsity levels and dimensions. A representative case ($p=100$, $n=100$, $r=0.7$) in the Appendix shows DC still yields partial separation and \texttt{mclust} recovers the clusters with moderate accuracy (ARI = 0.49).

\section{Illustrative applications}
\label{sec:application}

We illustrate the proposed data compression clustering method through two real-world applications. The first concern a microbiome dataset for individuals in populations with different diets \citep{Schnorr:2014}; the second concern historical voting data of the United Nations General Assembly \citep{voeten:2013}. In both cases we compare DC clustering with MDS-based clustering, the best-performing competitor in the simulation studies.

\subsection{Schnorr microbiome data}

The dataset collected by \cite{Schnorr:2014} consists of fecal microbiome samples from 16 Italian adults living in urban environments and 27 Hadza hunter-gatherers from Tanzania, for a total of $n = 43$ samples. The Hadza are one of the few remaining populations maintaining a traditional diet with minimal exposure to processed foods, and differences between the two groups are therefore expected.  The microbiome profiles comprise $p = 4707$ operational taxonomic unit (OTU) count variables.

\begin{table}[tb!]
\centering
\begin{tabular}{ lrr  }
 \toprule
& \multicolumn{2}{c}{Clusters}\\
 \midrule
 & \em 1 & \em 2\\
 \midrule
 Italy & 5 & 11 \\
 Tanzania & 27 & 0 \\
\bottomrule
\end{tabular}
 \caption{ \label{tab:schnorr} Schnorr microbiome data. Known population labels cross-tabulated against the two-cluster partition obtained by \texttt{mclust} on the DC representation with $q = 2$.}
\end{table}

We apply DC and MDS with $q = 2$ and cluster the resulting representations using \texttt{mclust}, with $G \in \{1, \dots, 10\}$ and all covariance parameterizations, letting BIC select the model. For DC, BIC selects $G = 2$, and the resulting partition attains an ARI of 0.58 against the known population labels. Table~\ref{tab:schnorr} compares the two classifications. All 27 Tanzanian samples are allocated to a single cluster, consistent with \cite{Schnorr:2014}, who report lower taxonomic diversity and variability among the Hadza than among the Italians. The Italian samples divide between the two clusters, with five assigned to the Hadza-dominated group. This is not unexpected: a relatively homogeneous diet, and hence microbiome composition, is plausible for the Hadza, whereas the Italian cohort was not diet-controlled and is likely to be more variable. The recovered partition appears to reflect this heterogeneity, with a subset of Italian samples exhibiting profiles closer to the Hadza. By contrast, MDS-based clustering selects $G = 7$ and yields an ARI of $-0.03$, showing no recovery of the underlying dietary distinction.

To assess the sensitivity of this result to the choice of $q$, we adopt a simple ensemble framework \citep{babu:2025,Casa:2021,Fern:2003}. We apply DC and MDS for embedding dimensions $q \in \{2, \dots, 10\}$ and cluster each representation with \texttt{mclust} as above. The nine partitions are aggregated into a co-occurrence matrix recording the proportion of times each pair of samples is allocated to the same cluster. Figure~\ref{fig:schnorr_heatmaps} shows the corresponding heatmaps. DC yields a stable partition into two main groups across the range of $q$, whereas MDS produces a granular structure. Across the ensemble, the median ARI against the Italian--Tanzanian classification is 0.50 for DC and $-0.02$ for MDS.

\begin{figure}[t]
    \centering    
    \begin{subfigure}[b]{0.45\textwidth}
        \centering
        \includegraphics[scale = 0.26]{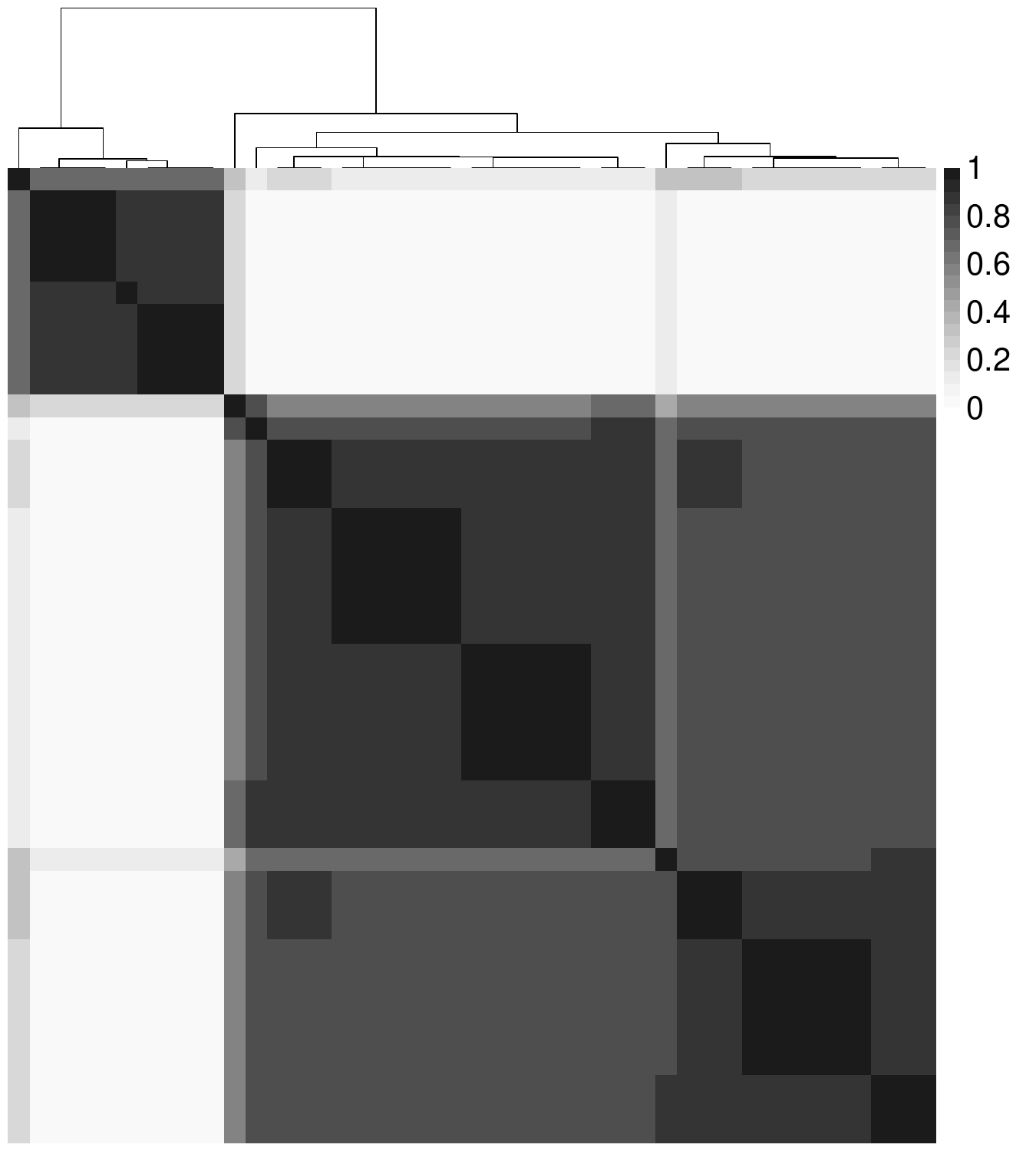}
        \caption{DC}
        \label{fig:un2s}
    \end{subfigure}
    \begin{subfigure}[b]{0.45\textwidth}
        \centering
        \includegraphics[scale = 0.26]{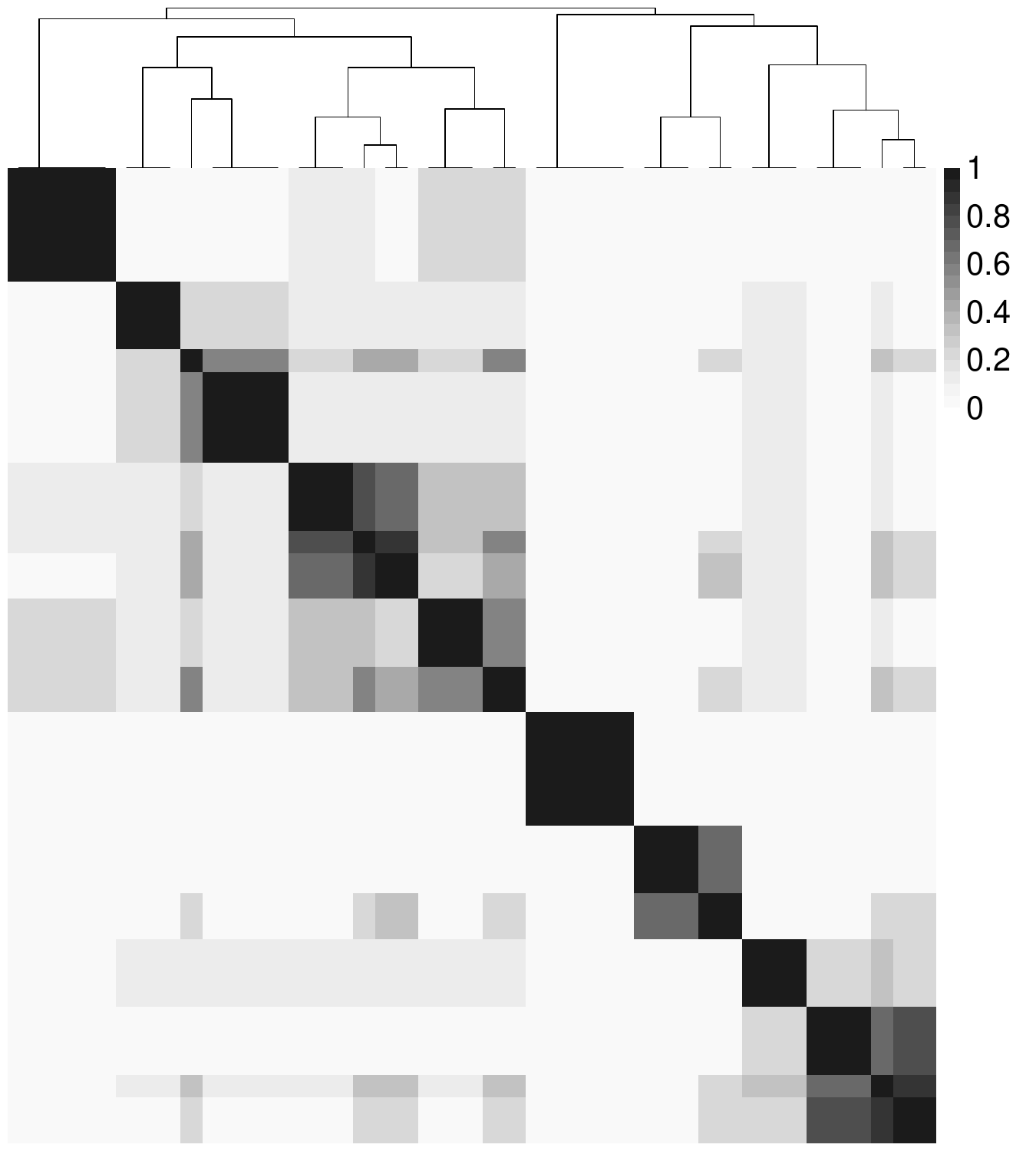}
        \caption{MDS}
        \label{fig:schnorr_heatmaps}
    \end{subfigure}
    \caption{\label{fig:schnorr_heatmaps} Schnorr microbiome data. Heatmaps of the co-occurrence matrices obtained across embedding dimensions $q \in \{2, \dots, 10\}$ for the DC and MDS representations. Each cell reports the proportion of partitions in which two samples are allocated to the same cluster; dendrograms are obtained by average-linkage hierarchical clustering.}
    \end{figure}

Overall, the compressed representation recovers the distinction between the Hadza and Italian samples, and does so stably across embedding dimensions, whereas the MDS representation does not. The residual disagreement with the known labels is concentrated in the Italian samples, which is consistent with the greater dietary variability of that group rather than with a failure to separate the populations. These findings agree with \cite{Shi:2022}, who report weak performance of distance-based methods in recovering the class structure of this dataset.

\subsection{United Nations General Assembly voting data}
The R package \texttt{unvotes} \citep{unvotes} records voting in the United Nations General Assembly between 1946 and 2020 \citep{voeten:2013}, covering 200 countries across 6202 roll calls. Because some states ceased to exist while others were created or joined the United Nations after 1946, votes are missing for countries that were not members at the time of a given roll call. We therefore encode the data with $K = 3$ categories: ``1'' and ``2'' denote ``no'' and ``yes'', while ``0'' denotes a missing vote, so that absent countries do not contribute to the compression for those roll calls. We fix $q = 5$, a choice that performed well across all simulation scenarios, and fit \texttt{mclust} with 2 to 15 clusters to the representations obtained from DC and from MDS on Hamming distances. Since no reference classification exists for these data, the two solutions are compared with each other rather than against a ground truth.

DC yields a solution with 7 clusters that appear to reflect broad geographical, political, and economic features. Cluster 2 groups long-term high-income democracies with histories of stable governance and membership of institutions such as the OECD, the EU, and NATO-related structures; most of Western Europe falls here. Cluster 7 resembles Cluster 2 but is largely post-socialist, comprising states that underwent transitions following the Cold War and the collapse of the Soviet Union (e.g.\ North Macedonia). Clusters 3 and 4 mostly contain developing and emerging economies: Cluster 3 includes regional powers (e.g.\ Mexico, Saudi Arabia) and middle-income states (e.g.\ Colombia), concentrated in South America and the Middle East, while Cluster 4 contains many African, Asian, and Caribbean countries. Cluster 6 consists mainly of small island and geographically peripheral states, several with tourism-based economies (e.g.\ Bahamas, St.\ Vincent and the Grenadines). Clusters 1 and 5 are more heterogeneous. Cluster 1 is predominantly African, together with some former European states, the United States, Iraq, and Israel; Cluster 5 includes central Asian states (e.g.\ Turkmenistan, Kazakhstan), small island states (e.g.\ Tuvalu), and countries with distinctive voting positions such as North Korea and Switzerland. These two groups admit no single interpretation, and their composition should be read as descriptive.

Figure~\ref{fig:un_paper} examines within-cluster cohesion for Clusters 2 and 3, showing yearly ``yes'' voting frequencies. For each cluster, the average trend is drawn as a bold black line with a shaded band giving the corresponding 95\% confidence interval, and three representative countries are highlighted in color; the remaining clusters are shown in the Appendix. Member countries track the cluster average closely in both cases, indicating cohesive voting behavior over time, though the width of the bands differs across clusters, with some more homogeneous than others.

MDS-based clustering also yields 7 clusters, in moderate agreement with the DC solution (ARI = 0.38). DC Clusters 2, 5, and 7 correspond closely to two MDS clusters, with almost identical membership, and a core subset of DC Cluster 1 remains grouped under MDS. The countries in the remaining DC clusters are redistributed differently across the MDS solution. Full cluster compositions are reported in the Appendix.

Overall, the compressed representation recovers a partition of the voting data that is cohesive and broadly interpretable in terms of geopolitical alignment. Several clusters correspond to recognizable groupings (long-term high-income democracies, post-socialist states, emerging economies, small island states) recovered directly from the roll-call matrix without any external covariate information. The yearly voting profiles in Figure~\ref{fig:un_paper} indicate that this structure is temporally coherent: countries grouped together on the basis of the compressed coordinates also tend to vote similarly over time, even though the compression treats the roll calls as exchangeable and uses no chronological
information.

\begin{figure}[t]
    \centering    
    \begin{subfigure}[b]{0.45\textwidth}
        \centering
        \includegraphics[scale = 0.425]{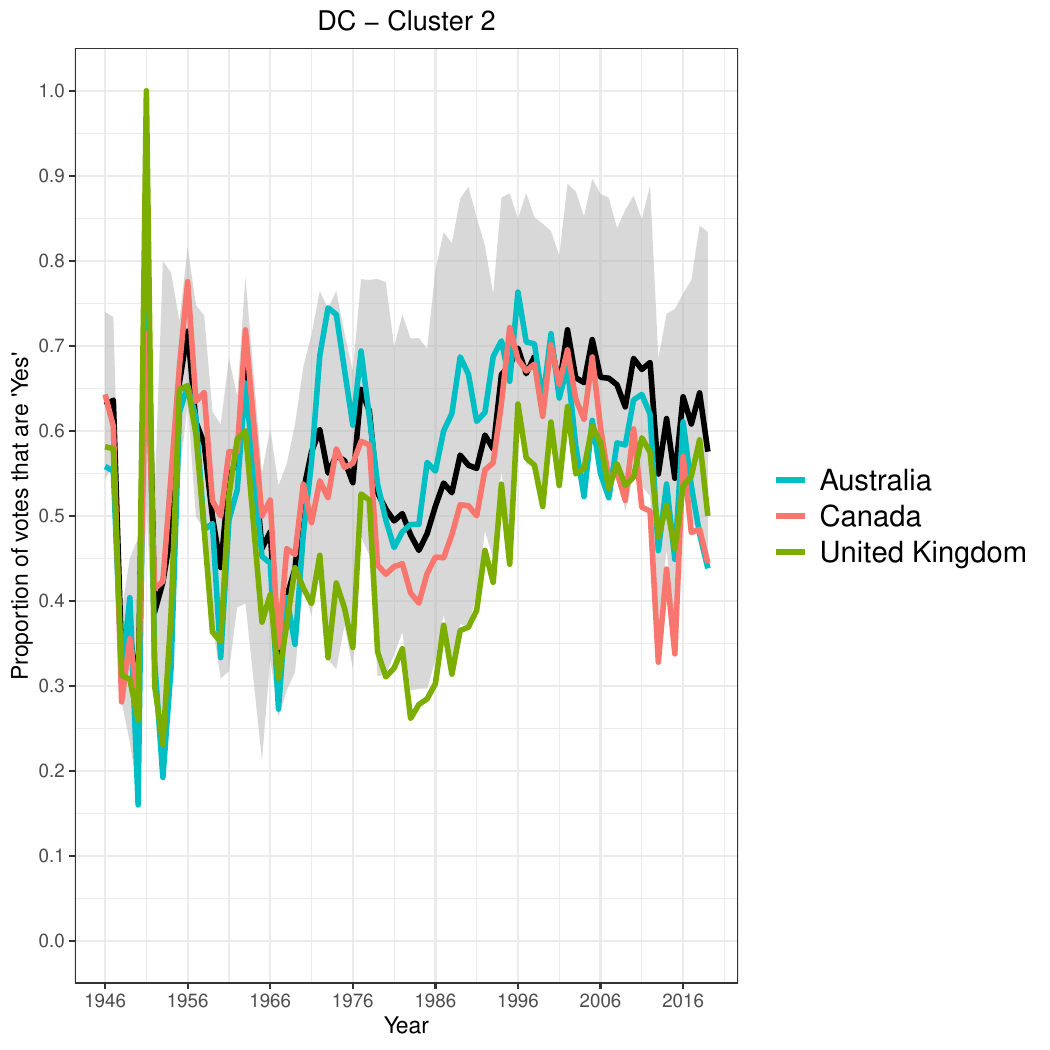}
        \caption{Cluster 2}
        \label{fig:un2s}
    \end{subfigure}\quad
    \begin{subfigure}[b]{0.45\textwidth}
        \centering
        \includegraphics[scale = 0.425]{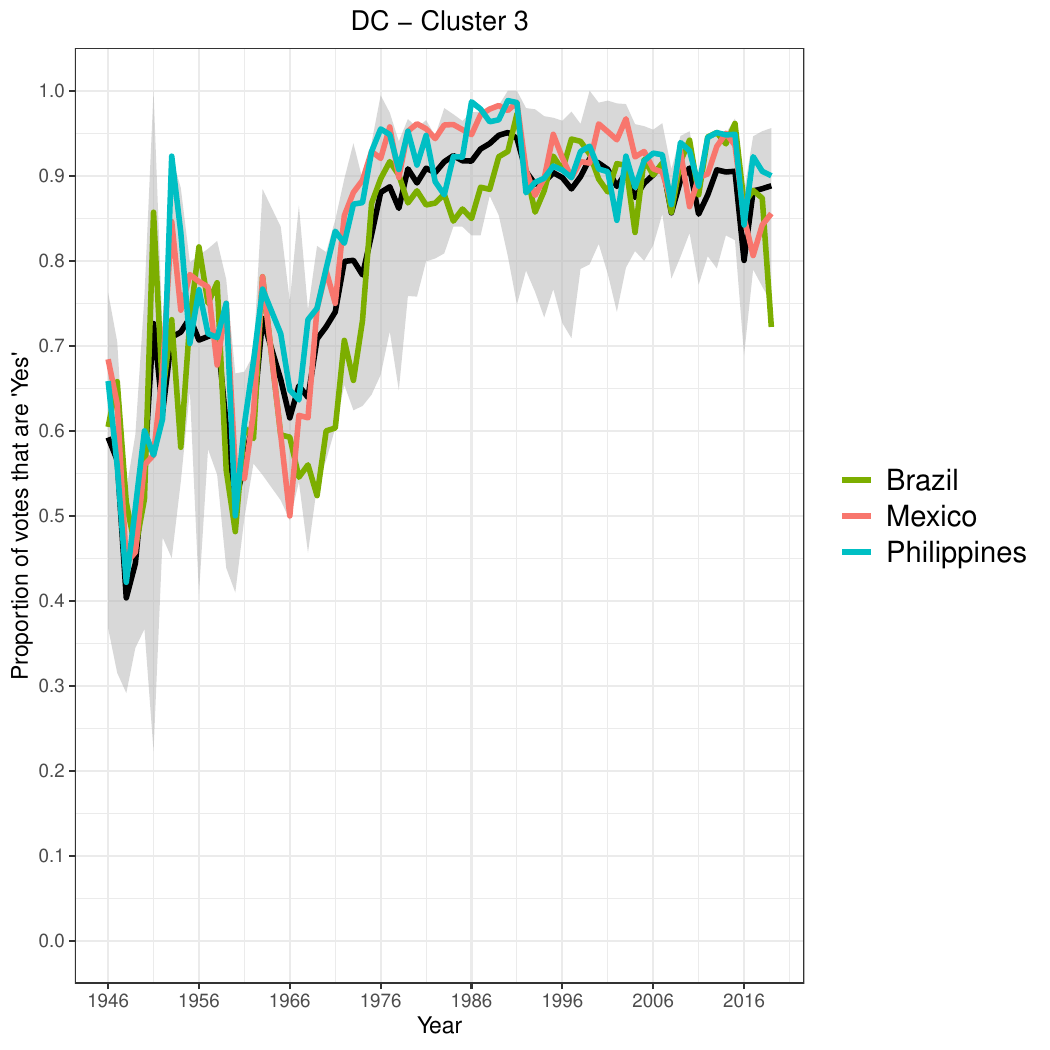}
        \caption{Cluster 10}
        \label{fig:uns}
    \end{subfigure}
      \caption{\label{fig:un_paper} United Nations voting data. Yearly ``yes'' frequency for countries in Clusters 2 and 3 of the DC-based partition. Bold lines give the average yearly ``yes'' frequency within each cluster, grey bands the corresponding 95\% confidence interval, and coloured curves three representative countries per cluster.}
    \end{figure}

\section{Discussion}
\label{sec:discussion}

We introduced a fast, deterministic compression framework for dimension reduction and clustering of high-dimensional discrete data. The method maps discrete observations to a low-dimensional continuous representation through weighted aggregations of the original variables, producing a numerically stable embedding for binary and count data. It avoids the pairwise dissimilarity matrices, eigendecompositions, stochastic optimization, and high-dimensional probabilistic modelling required by many existing approaches, yielding substantial computational gains while retaining meaningful cluster structure. The supporting theory links the original discrete space to its compressed representation and motivates both model-based and distance-based clustering after compression.

An extensive simulation study supported the validity of the proposed method. Across independent and correlated count data, binary observations, sparse settings, and scenarios with many irrelevant variables, clustering on the compressed representation accurately recovered the underlying partition, and performance remained strong even when the independence assumption underlying the Gaussian approximation was violated. Computationally, the method was consistently faster than PCA, multidimensional scaling, and t-SNE.

Some aspects of the data compression framework merit further consideration. The theory primarily addresses clusters distinguished by location: Proposition~\ref{prop:medie} guarantees preservation of centroid separation, and Remark~\ref{rem:medie_conv} shows how mean differences accumulate under compression. Clusters differing mainly in variance, dispersion, or dependence structure may therefore be harder to distinguish, though in many discrete-data settings this is mitigated because location and dispersion are intrinsically linked, as in Poisson and multinomial models. A second limitation concerns irrelevant variables: although simulations indicate robustness to moderate noise, variables unrelated to clustering still contribute to the compressed coordinates and can reduce the signal-to-noise ratio. As in other high-dimensional clustering methods, feature screening or variable selection prior to compression \citep{krazanowski:2009,fop:2018} could address this, particularly in ultra-high-dimensional settings. 

Several extensions merit investigation. Alternative block constructions, including data-adaptive or hierarchical partitions, may improve performance under heterogeneous dependence structures. While Gaussian mixture models and k-means are natural choices in the compressed space, nonparametric mixtures, density-based methods, and graph-based procedures could also be employed. Because the compression is independent of the downstream clustering algorithm, it may serve as a general-purpose preprocessing step for a broad range of clustering methodologies. A further open problem is inference after clustering: many applications require inference on cluster differences once a partition has been estimated, and recent work on selective and post-clustering inference \citep{enjalbert:2025, chen:2023} offers a foundation for extending the framework to hypothesis testing and uncertainty quantification in the compressed space.

More broadly, this work shows that deterministic aggregation can provide an effective alternative to conventional dimension-reduction techniques for clustering discrete data. By combining a simple positional encoding with bounded coefficients, the proposed method offers scalability, theoretical tractability, and satisfactory empirical performance. As high-dimensional discrete datasets become more common, compression-based approaches may offer a useful foundation for future methodological development.

\bibliographystyle{apalike}
\bibliography{bibliography}

@article{chen:2025,
    author = {Chen, Guanhua and Wang, Xinyue and Sun, Qiang and Tang, Zheng-Zheng},
    title = {Multidimensional scaling improves distance-based clustering for microbiome data},
    journal = {Bioinformatics},
    volume = {41},
    number = {2},
    pages = {},
    year = {2025},
    month = {01}
}

@book{borwein:1995,
  author    = {Peter Borwein and Tam{\'a}s Erd{\'e}lyi},
  title     = {Polynomials and Polynomial Inequalities},
  series    = {Graduate Texts in Mathematics},
  volume    = {161},
  publisher = {Springer},
  address   = {New York},
  year      = {1995},
  isbn      = {978-0-387-94509-5}
}

@article{shannon:1948,
  author  = {Claude E. Shannon},
  title   = {A Mathematical Theory of Communication},
  journal = {Bell System Technical Journal},
  volume  = {27},
  pages   = {379--423, 623--656},
  year    = {1948}
}

@book{lang:2002,
  author    = {Serge Lang},
  title     = {Algebra},
  edition   = {3},
  publisher = {Springer},
  address   = {New York},
  year      = {2002}
}

@book{borg:2005,
  author    = {Ingwer Borg and Patrick J. F. Groenen},
  title     = {Modern Multidimensional Scaling: Theory and Applications},
  edition   = {2},
  publisher = {Springer},
  address   = {New York},
  year      = {2005},
  isbn      = {978-0-387-25150-9},
  doi       = {10.1007/0-387-28981-X}
}

@article{maaten:2008,
  author  = {Laurens van der Maaten and Geoffrey Hinton},
  title   = {Visualizing Data Using {t-SNE}},
  journal = {Journal of Machine Learning Research},
  volume  = {9},
  pages   = {2579--2605},
  year    = {2008},
  issn    = {1532-4435}
}

@article{luxburg:2007,
  author  = {Ulrike von Luxburg},
  title   = {A Tutorial on Spectral Clustering},
  journal = {Statistics and Computing},
  volume  = {17},
  number  = {4},
  pages   = {395--416},
  year    = {2007},
  doi     = {10.1007/s11222-007-0033-z}
}

@book{jolliffe:2002,
  author    = {Ian T. Jolliffe},
  title     = {Principal Component Analysis},
  edition   = {2},
  publisher = {Springer},
  address   = {New York},
  year      = {2002},
  isbn      = {978-0-387-95442-4}
}

@book{bishop:2006,
  author    = {Christopher M. Bishop},
  title     = {Pattern Recognition and Machine Learning},
  publisher = {Springer},
  address   = {New York},
  year      = {2006},
  isbn      = {978-0-387-31073-2}
}

@article{Casa:2021,
  author  = {Casa, A. and Scrucca, L. and Menardi, G.},
  title   = {{Better than the best? Answers via model ensemble in density-based clustering}},
  journal = {Advances in Data Analysis and Classification},
  volume  = {15},
  number  = {},
  pages   = {599-–623},
  year    = {2021},
  doi     = {10.1007/s11634-020-00423-6}
}

@article{Fern:2003,
  author  = {Fern, X.Z. and Brodley, C.E.},
  title   = {Random projection for high dimensional data clustering: a cluster ensemble approach},
  journal = {Proceedings of the 20th international conference on machine learning},
  volume  = {},
  number  = {},
  pages   = {186-–193},
  year    = {2003},
  doi     = {}
}

@article{Schnorr:2014,
  author  = {Schnorr, S.L. and Candela, M. and Rampelli, S. and Centanni, M. and Consolandi, C. and Basaglia, G. and Turroni, S. and Biagi, E. and Peano, C. and Severgnini, M. and others},
  title   = {{Gut microbiome of the Hadza hunter-gatherers}},
  journal = {Nature Communications},
  volume  = {5},
  number  = {3654},
  pages   = {},
  year    = {2014},
  doi     = {10.1038/ncomms4654}
}

@article{Shi:2022,
  author  = {Shi, Y. and Zhang, L. and Peterson, C.B. and others},
  title   = {Performance determinants of unsupervised clustering methods for microbiome data},
  journal = {Microbiome},
  volume  = {10},
  number  = {25},
  pages   = {},
  year    = {2022},
  doi     = {10.1186/s40168-021-01199-3}
}

@article{Wang:2025,
  author  = {Wang, C. and Chen, Z. and Xi, R.},
  title   = {Feature screening for clustering analysis of count data with an application to single-cell RNA-sequencing},
  journal = {Annals of Applied Statistics},
  volume  = {19},
  number  = {4},
  pages   = {2738--2758},
  year    = {2025},
  doi     = {10.1214/25-AOAS2102}
}

@Manual{vegan,
    title = {vegan: Community Ecology Package},
    author = {Jari Oksanen and Gavin L. Simpson and F. Guillaume Blanchet and Roeland Kindt and Pierre Legendre and Peter R. Minchin and R.B. O'Hara and Peter Solymos and M. Henry H. Stevens and Eduard Szoecs and Helene Wagner and Matt Barbour and Michael Bedward and Ben Bolker and Daniel Borcard and Tuomas Borman and Gustavo Carvalho and Michael Chirico and Miquel {De Caceres} and Sebastien Durand and Heloisa Beatriz Antoniazi Evangelista and Rich FitzJohn and Michael Friendly and Brendan Furneaux and Geoffrey Hannigan and Mark O. Hill and Leo Lahti and Cameron Martino and Dan McGlinn and Marie-Helene Ouellette and Eduardo {Ribeiro Cunha} and Tyler Smith and Adrian Stier and Cajo J.F. {Ter Braak} and James Weedon},
    year = {2026},
    note = {R package version 2.7-3},
    url = {https://CRAN.R-project.org/package=vegan},
    doi = {10.32614/CRAN.package.vegan},
  }

@article{hubert:1985,
  title={Comparing partitions},
  author={Hubert, Lawrence and Arabie, Phipps},
  journal={Journal of classification},
  volume={2},
  number={1},
  pages={193--218},
  year={1985},
  publisher={Springer}
}

@Manual{kernlab1,
    title = {kernlab: Kernel-Based Machine Learning Lab},
    author = {Alexandros Karatzoglou and Alex Smola and Kurt Hornik},
    year = {2024},
    note = {R package version 0.9-33},
    url = {https://CRAN.R-project.org/package=kernlab},
    doi = {10.32614/CRAN.package.kernlab},
  }

@Article{kernlab2,
    title = {kernlab -- An {S4} Package for Kernel Methods in {R}},
    author = {Alexandros Karatzoglou and Alex Smola and Kurt Hornik and Achim Zeileis},
    journal = {Journal of Statistical Software},
    year = {2004},
    volume = {11},
    number = {9},
    pages = {1--20},
    doi = {10.18637/jss.v011.i09},
  }

@Manual{rtsne,
    title = {{Rtsne}: T-Distributed Stochastic Neighbor Embedding using Barnes-Hut
Implementation},
    author = {Jesse H. Krijthe},
    year = {2015},
    note = {R package version 0.17},
    url = {https://github.com/jkrijthe/Rtsne},
  }

@Article{simstudy,
    title = {simstudy: Illuminating research methods through data generation},
    author = {Keith Goldfeld and Jacob Wujciak-Jens},
    publisher = {The Open Journal},
    journal = {Journal of Open Source Software},
    year = {2020},
    volume = {5},
    number = {54},
    pages = {2763},
    url = {},
    doi = {10.21105/joss.02763},
  }

@article{arbelaitz:2013,
title = {An extensive comparative study of cluster validity indices},
journal = {Pattern Recognition},
volume = {46},
number = {1},
pages = {243-256},
year = {2013},
issn = {0031-3203},
doi = {https://doi.org/10.1016/j.patcog.2012.07.021},
url = {},
author = {Olatz Arbelaitz and Ibai Gurrutxaga and Javier Muguerza and Jesús M. Pérez and Iñigo Perona}
}

@article{batool:2021,
title = {Clustering with the Average Silhouette Width},
journal = {Computational Statistics \& Data Analysis},
volume = {158},
pages = {107190},
year = {2021},
issn = {0167-9473},
doi = {https://doi.org/10.1016/j.csda.2021.107190},
url = {},
author = {Fatima Batool and Christian Hennig}
}

@Article{lausser:2018,
author={Lausser, Ludwig
and Schmid, Florian
and Schirra, Lyn-Rouven
and Wilhelm, Adalbert F. X.
and Kestler, Hans A.},
title={Rank-based classifiers for extremely high-dimensional gene expression data},
journal={Advances in Data Analysis and Classification},
year={2018},
month={Dec},
day={01},
volume={12},
number={4},
pages={917-936},
issn={1862-5355},
doi={10.1007/s11634-016-0277-3},
url={}
}

@article{yang:2021,
author = {Congyuan Yang and Carey E. Priebe and Youngser Park and David J. Marchette},
title = {Simultaneous Dimensionality and Complexity Model Selection for Spectral Graph Clustering},
journal = {Journal of Computational and Graphical Statistics},
volume = {30},
number = {2},
pages = {422--441},
year = {2021},
publisher = {Taylor \& Francis},
doi = {10.1080/10618600.2020.1824870}
}

@article{rousseeuw:1987,
  author  = {Rousseeuw, Peter J.},
  title   = {Silhouettes: A Graphical Aid to the Interpretation and Validation of Cluster Analysis},
  journal = {Journal of Computational and Applied Mathematics},
  volume  = {20},
  pages   = {53--65},
  year    = {1987},
  doi     = {10.1016/0377-0427(87)90125-7}
}

@article{tibshirani:2001,
    author = {Tibshirani, Robert and Walther, Guenther and Hastie, Trevor},
    title = {Estimating the Number of Clusters in a Data Set Via the Gap Statistic},
    journal = {Journal of the Royal Statistical Society Series B: Statistical Methodology},
    volume = {63},
    number = {2},
    pages = {411-423},
    year = {2001},
    month = {07},
    issn = {1369-7412},
    doi = {10.1111/1467-9868.00293},
    url = {},
    eprint = {https://academic.oup.com/jrsssb/article-pdf/63/2/411/49590410/jrsssb_63_2_411.pdf},
}

@book{bouveyron:2019, 
place={Cambridge}, series={Cambridge Series in Statistical and Probabilistic Mathematics}, title={Model-Based Clustering and Classification for Data Science: With Applications in R}, publisher={Cambridge University Press}, author={Bouveyron, Charles and Celeux, Gilles and Murphy, T. Brendan and Raftery, Adrian E.}, year={2019}, collection={Cambridge Series in Statistical and Probabilistic Mathematics}}

@Book{mclust,
    title = {Model-Based Clustering, Classification, and Density Estimation Using {mclust} in {R}},
    author = {Luca Scrucca and Chris Fraley and T. Brendan Murphy and Adrian E. Raftery},
    publisher = {Chapman and Hall/CRC},
    isbn = {978-1032234953},
    doi = {10.1201/9781003277965},
    year = {2023},
    url = {},
  }

@Manual{R,
    title = {R: A Language and Environment for Statistical Computing},
    author = {{R Core Team}},
    organization = {R Foundation for Statistical Computing},
    address = {Vienna, Austria},
    year = {2026},
    url = {https://www.R-project.org/},
  }

@article{fraley:2002,
author = {Chris Fraley and Adrian E Raftery},
title = {Model-Based Clustering, Discriminant Analysis, and Density Estimation},
journal = {Journal of the American Statistical Association},
volume = {97},
number = {458},
pages = {611--631},
year = {2002},
publisher = {Taylor \& Francis},
doi = {10.1198/016214502760047131},
}

@Article{anderlucci:2022,
author={Anderlucci, Laura
and Fortunato, Francesca
and Montanari, Angela},
title={High-Dimensional Clustering via Random Projections},
journal={Journal of Classification},
year={2022},
month={Mar},
day={01},
volume={39},
number={1},
pages={191-216},
issn={1432-1343},
doi={10.1007/s00357-021-09403-7},
url={}
}

@Article{payne:2025,
author={Payne, Andrea
and Silva, Anjali
and Rothstein, Steven J.
and McNicholas, Paul D.
and Subedi, Sanjeena},
title={Finite mixtures of multivariate Poisson-log normal factor analyzers for clustering count data},
journal={Statistics and Computing},
year={2025},
month={Sep},
day={15},
volume={35},
number={6},
pages={189},
issn={1573-1375},
doi={10.1007/s11222-025-10720-9},
url={}
}

@Article{fang:2023,
author={Fang, Yuan
and Subedi, Sanjeena},
title={Clustering microbiome data using mixtures of logistic normal multinomial models},
journal={Scientific Reports},
year={2023},
month={Sep},
day={07},
volume={13},
number={1},
pages={14758},
issn={2045-2322},
doi={10.1038/s41598-023-41318-8},
url={}
}

@article{bouveyron:2014,
title = {{Model-based clustering of high-dimensional data: A review}},
journal = {Computational Statistics \& Data Analysis},
volume = {71},
pages = {52-78},
year = {2014},
issn = {0167-9473},
doi = {https://doi.org/10.1016/j.csda.2012.12.008},
url = {},
author = {Charles Bouveyron and Camille Brunet-Saumard},
}

@article{yao:2025,
  author  = {Dapeng Yao and Fangzheng Xie and Yanxun Xu},
  title   = {{B}ayesian Sparse {G}aussian Mixture Model for Clustering in High Dimensions},
  journal = {Journal of Machine Learning Research},
  year    = {2025},
  volume  = {26},
  number  = {21},
  pages   = {1--50},
  url     = {}
}

@article{zhou:2025,
author = {Yijia Zhou and Kyle A. Gallivan and Adrian Barbu},
title = {{Scalable clustering: Large scale unsupervised learning of Gaussian mixture models with outliers}},
journal = {Journal of Computational and Graphical Statistics},
volume = {34},
number = {3},
pages = {884--895},
year = {2025},
publisher = {Taylor \& Francis},
doi = {10.1080/10618600.2024.2414889}
}

@article{amiri:2018,
author = {Saeid Amiri and Bertrand S. Clarke and Jennifer L. Clarke},
title = {Clustering Categorical Data via Ensembling Dissimilarity Matrices},
journal = {Journal of Computational and Graphical Statistics},
volume = {27},
number = {1},
pages = {195--208},
year = {2018},
publisher = {Taylor \& Francis},
doi = {10.1080/10618600.2017.1305278}
}

@article{wang:nonpar:2025,
author = {Yong Wang and Reza Modarres},
title = {Clustering of high-dimensional observations},
journal = {Journal of Nonparametric Statistics},
volume = {37},
number = {2},
pages = {319--343},
year = {2025},
publisher = {Taylor \& Francis},
doi = {10.1080/10485252.2024.2378904}
}

@article{champon:2026,
author = {Xiaoxia Champon and Ana-Maria Staicu and Anthony Weishampel and Chathura Jayalah and William Rand},
title = {Clustering Social Media Users Using Categorical-Valued Functional Data Analysis},
journal = {Journal of the American Statistical Association},
volume = {0},
number = {ja},
pages = {1--21},
year = {2026}
}

@article{tian:2024,
author = {Zhiyi Tian and Jiaming Xu and Jen Tang},
title = {Clustering High-Dimensional Noisy Categorical Data},
journal = {Journal of the American Statistical Association},
volume = {119},
number = {548},
pages = {3008--3019},
year = {2024},
publisher = {Taylor \& Francis},
doi = {10.1080/01621459.2023.2298028},
}

@article{argiento:2025,
author = {Raffaele Argiento and Edoardo Filippi-Mazzola and Lucia Paci},
title = {Model-Based Clustering of Categorical Data Based on the {H}amming Distance},
journal = {Journal of the American Statistical Association},
volume = {120},
number = {550},
pages = {1178--1188},
year = {2025},
publisher = {Taylor \& Francis},
doi = {10.1080/01621459.2024.2402568},
}

@article{ghilotti:2025,
    author = {Ghilotti, L and Beraha, M and Guglielmi, A},
    title = {Bayesian clustering of high-dimensional data via latent repulsive mixtures},
    journal = {Biometrika},
    volume = {112},
    number = {2},
    pages = {asae059},
    year = {2025},
    month = {04}
}

@article{rao:2025,
author = {Rao, Jackie and Kirk, Paul D W},
title = {{VICatMix}: variational {B}ayesian clustering and variable selection for discrete biomedical data},
journal = {Bioinformatics Advances},
volume = {5},
number = {1},
pages = {vbaf055},
year = {2025},
month = {01},
issn = {2635-0041},
doi = {10.1093/bioadv/vbaf055}
}

@article{papastamoulis:2017,
  author = {Papastamoulis, Panagiotis and Rattray, Magnus},
  title = {{BayesBinMix}: an {R} Package for Model Based Clustering of Multivariate Binary Data},
  journal = {The R Journal},
  year = {2017},
  note = {https://doi.org/10.32614/RJ-2017-022},
  doi = {10.32614/RJ-2017-022},
  volume = {9},
  issue = {1},
  issn = {2073-4859},
  pages = {403-420}
}

@article{bouguila:2009,
title = {Discrete data clustering using finite mixture models},
journal = {Pattern Recognition},
volume = {42},
number = {1},
pages = {33-42},
year = {2009},
issn = {0031-3203},
author = {Nizar Bouguila and Walid ElGuebaly},
}

@article{enjalbert:2025,
year = {2025},
author = {Enjalbert Courrech, Nicolas and Maugis-Rabusseau, Cathy and Neuvial, Pierre},
title = {Review of Post-Clustering Inference Methods},
journal = {International Statistical Review}
}

@Article{papastamoulis:2023,
author={Papastamoulis, Panagiotis},
title={Model based clustering of multinomial count data},
journal={Advances in Data Analysis and Classification},
year={2023},
month={Jul},
day={05},
}

@article{tang:2015,
title = {Model based clustering of high-dimensional binary data},
journal = {Computational Statistics \& Data Analysis},
volume = {87},
pages = {84-101},
year = {2015},
issn = {0167-9473},
doi = {https://doi.org/10.1016/j.csda.2014.12.009},
url = {},
author = {Yang Tang and Ryan P. Browne and Paul D. McNicholas},
}

@article{brini:2025,
author = {Alberto Brini and Abu Manju and Edwin R. van den Heuvel},
title = {A variable clustering approach for overdispersed high-dimensional count data using a copula-based mixture model},
journal = {Communications in Statistics - Simulation and Computation},
volume = {54},
number = {7},
pages = {2564--2584},
year = {2025},
publisher = {Taylor \& Francis},
doi = {10.1080/03610918.2024.2314666},
}

@Article{failli:2025,
author={Failli, Dalila
and Marino, Maria Francesca
and Arpino, Bruno},
title={Hierarchical Mixtures of Latent Trait Analyzers with concomitant variables for multivariate binary data},
journal={Statistics and Computing},
year={2025},
month={Aug},
day={28},
volume={35},
number={6},
pages={177},
}

@article{gormley:2023,
   author = "Gormley, Isobel Claire and Murphy, Thomas Brendan and Raftery, Adrian E.",
   title = "Model-Based Clustering", 
   journal= "Annual Review of Statistics and Its Application",
   year = "2023",
   volume = "10",
   number = "Volume 10, 2023",
   pages = "573-595",
   doi = "https://doi.org/10.1146/annurev-statistics-033121-115326",
   url = "",
   publisher = "Annual Reviews",
   issn = "2326-831X",
   type = "Journal Article",
  }

@article{wade:2023,
    author = {Wade, S.},
    title = {Bayesian cluster analysis},
    journal = {Philosophical Transactions of the Royal Society A: Mathematical, Physical and Engineering Sciences},
    volume = {381},
    number = {2247},
    pages = {20220149},
    year = {2023},
    month = {03},
    issn = {1364-503X},
    doi = {10.1098/rsta.2022.0149},
    url = {},
    eprint = {https://royalsocietypublishing.org/rsta/article-pdf/doi/10.1098/rsta.2022.0149/1327776/rsta.2022.0149.pdf},
}

@article{chandra:2023,
  author  = {Noirrit Kiran Chandra and Antonio Canale and David B. Dunson},
  title   = {Escaping The Curse of Dimensionality in {Bayesian} Model-Based Clustering},
  journal = {Journal of Machine Learning Research},
  year    = {2023},
  volume  = {24},
  number  = {144},
  pages   = {1--42},
  url     = {}
}

@article{chen:2023,
  author  = {Yiqun T. Chen and Daniela M. Witten},
  title   = {Selective inference for k-means clustering},
  journal = {Journal of Machine Learning Research},
  year    = {2023},
  volume  = {24},
  number  = {152},
  pages   = {1--41},
  url     = {}
}

@article{krazanowski:2009,
title = {A simple method for screening variables before clustering microarray data},
journal = {Computational Statistics \& Data Analysis},
volume = {53},
number = {7},
pages = {2747-2753},
year = {2009},
issn = {0167-9473},
doi = {https://doi.org/10.1016/j.csda.2009.02.001},
url = {},
author = {Wojtek J. Krzanowski and David J. Hand}
}

@article{kiselev:2019,
  author  = {Kiselev, Vladimir Yu. and Andrews, Tallulah S. and Hemberg, Martin},
  title   = {Challenges in unsupervised clustering of single-cell {RNA}-seq data},
  journal = {Nature Reviews Genetics},
  volume  = {20},
  number  = {5},
  pages   = {273--282},
  year    = {2019},
  doi     = {10.1038/s41576-018-0088-9}
}

@article{grabski:2023,
  author  = {Grabski, Isabella N. and Street, Kelly and Irizarry, Rafael A.},
  title   = {Significance analysis for clustering with single-cell {RNA}-sequencing data},
  journal = {Nature Methods},
  volume  = {20},
  number  = {8},
  pages   = {1196--1202},
  year    = {2023},
  doi     = {10.1038/s41592-023-01933-9}
}

@article{weller:2020,
  author  = {Weller, Bridget E. and Bowen, Natasha K. and Faubert, Sarah J.},
  title   = {Latent class analysis: A guide to best practice},
  journal = {Journal of Black Psychology},
  volume  = {46},
  number  = {4},
  pages   = {287--311},
  year    = {2020},
  doi     = {10.1177/0095798420930932}
}

@article{dinh:2025,
title = {Categorical data clustering: 25 years beyond K-modes},
journal = {Expert Systems with Applications},
volume = {272},
pages = {126608},
year = {2025},
issn = {0957-4174},
doi = {https://doi.org/10.1016/j.eswa.2025.126608},
url = {},
author = {Tai Dinh and Hauchi Wong and Philippe Fournier-Viger and Daniil Lisik and Minh-Quyet Ha and Hieu-Chi Dam and Van-Nam Huynh},
}

@article{lyu:2026, title={Spectral Clustering with Likelihood Refinement for High-Dimensional Latent Class Recovery}, volume={91}, DOI={10.1017/psy.2026.10095}, number={2}, journal={Psychometrika}, author={Lyu, Zhongyuan and Gu, Yuqi}, year={2026}, pages={695–723}}

@Article{huang:1998,
author={Huang, Zhexue},
title={Extensions to the k-Means Algorithm for Clustering Large Data Sets with Categorical Values},
journal={Data Mining and Knowledge Discovery},
year={1998},
month={Sep},
day={01},
volume={2},
number={3},
pages={283-304},
}

@article{guha:2000,
title = {Rock: A robust clustering algorithm for categorical attributes},
journal = {Information Systems},
volume = {25},
number = {5},
pages = {345-366},
year = {2000},
issn = {0306-4379},
doi = {https://doi.org/10.1016/S0306-4379(00)00022-3},
url = {},
author = {Sudipto Guha and Rajeev Rastogi and Kyuseok Shim},
}

@article{fop:2018,
  title={Variable selection methods for model-based clustering},
  author={Fop, Michael and Murphy, Thomas Brendan},
  journal={Statistics Surveys},
  volume={12},
  pages={18--65},
  year={2018},
  publisher={Amer. Statist. Assoc., the Bernoulli Soc., the Inst. Math. Statist., and the~…}
}

@article{babu:2025,
	author = {Babu, Ganesh and Gowen, Aoife and Fop, Michael and Gormley, Isobel Claire},
	date = {2025/06/01},
	doi = {10.1007/s11634-025-00623-y},
	id = {Babu2025},
	isbn = {1862-5355},
	journal = {Advances in Data Analysis and Classification},
	number = {2},
	pages = {323--359},
	title = {A consensus-constrained parsimonious Gaussian mixture model for clustering hyperspectral images},
	volume = {19},
	year = {2025},
}

@article{mori:2026,
	author = {Mori, Matteo and Anderlucci, Laura},
	date = {2026/05/21},
	doi = {10.1007/s11634-026-00684-7},
	id = {Mori2026},
	isbn = {1862-5355},
	journal = {Advances in Data Analysis and Classification},
	title = {Model-based clustering of functional data via random projection ensembles},
	year = {2026}}

@Manual{unvotes,
    title = {unvotes: United Nations General Assembly Voting Data},
    author = {David Robinson},
    year = {2021},
    note = {R package version 0.3.0},
    url = {https://CRAN.R-project.org/package=unvotes},
    doi = {10.32614/CRAN.package.unvotes},
  }

@incollection{voeten:2013,
  author    = {Voeten, Erik},
  title     = {Data and Analyses of Voting in the {UN} General Assembly},
  booktitle = {Routledge Handbook of International Organization},
  editor    = {Reinalda, Bob},
  publisher = {Routledge},
  address   = {London},
  year      = {2013},
  pages     = {54--66}
}

\clearpage
\appendix

\section{Proofs}
\subsection*{Proof of Proposition 2.1}
\begin{proof}
Let $t=K^{1/(pK)}$ and define from Equation 2:
 \[
(z_i - z_h) =\sum_{j=1}^p t^{p-j} (y_{ij} -y_{hj} )= \sum_{j=1}^p t^{p-j} r_{ihj} , \quad i, h=1, \dots, n, i\neq h.
\]
This expression is a polynomial in $t$ of degree at most $p-1$, with integer coefficients $r_{ihj}$. Since $t$ is irrational and its minimal polynomial over $\mathbb{Q}$ has degree of at least $p$ \citep{lang:2002}, the polynomial can vanish only if all coefficients are zero, implying only when $\mathbf{y}_i=\mathbf{y}_h$.
\end{proof}

\subsection*{Proof of Proposition 2.2}
\begin{proof}

For block $s$, denoted by $B_s$, 
\[\left(z_{is}-z_{hs} \right)= \sum_{j \in B_s} K^{\frac{p_s-j}{p_sK}} \left( y_{ij} -y_{hj}\right) = \sum_{j \in B_s}t_j \left( y_{ij} -y_{hj}\right) =
\sum_{j \in B_s} t_j \Delta_{js} ,\]
with $1 \leq t_j < K^{\frac{1}{K}}$. By the triangle inequality:
\[
d_{ih}^* = \left( \sum_{s =1}^{q} |z_{is}-z_{hs}|^2\right)^{\frac{1}{2}}\leq \left( \sum_{s =1}^{q} \left(\sum_{j \in B_s} t_j |\Delta_{js}|  \right)^2\right)^{\frac{1}{2}} \leq
K^{\frac{1}{K}}\left( \sum_{s =1}^{q}\left(\sum_{j \in B_s}  |\Delta_{js}| \right)^2\right)^{\frac{1}{2}}
\]
Therefore, as $\ell1$ distances are an upperbound for $\ell 2$ distances computed on the same vector, we have:
\[
d_{ih}^* \leq K^{\frac{1}{K} }  \sum_{s =1}^{q} \sum_{j \in B_s} |\Delta_{js}| = K^{\frac{1}{K} } d_{ih} \leq 3^{\frac{1}{3} } d_{ih}.
\]
\end{proof}

\subsection*{Proof of Proposition 2.4}

\begin{proof}
The difference between any two compressed cluster centroids is:
\[
\begin{split}
   u_g - u_l &= \frac{1}{n_g} \sum_{i} c_{ig}z_{i} - \frac{1}{n_{l}} \sum_{i}c_{il} z_{i} \\
    &=
    \frac{1}{n_g} \sum_{i}c_{ig} \sum_{j=1}^p K^{\frac{p-j}{Kp}} y_{ij}  - \frac{1}{n_l} \sum_{i}c_{il} \sum_{j=1}^p K^{\frac{p-j}{Kp}} y_{ij} \\
    &= \sum_{j=1}^p K^{\frac{p-j}{Kp}} \left( \frac{1}{n_g} \sum_{i}c_{ig} y_{ij} - \frac{1}{n_{l}} \sum_{i}c_{il} y_{ij} \right)\\
    &=\sum_{j=1}^p K^{\frac{p-j}{Kp}} \left( m_{gj}  -  m_{lj} \right).
\end{split}
\]
The expression above is a polynomial in powers of $K^{1/(pK)}$, with rational coefficients given by the empirical cluster centroid differences $m_{gj}-m_{lj}$. By the same argument used in the proof of Proposition 2.1, this polynomial can vanish only if all coefficients are zero. Hence, if $\mathbf{m}_g$ and $\mathbf{m}_l$ differ in at least one coordinate, then $u_g \neq u_l$.
\end{proof}

\section{Variable ordering}

We empirically assess the sensitivity of the compression to variable ordering in terms of both the stability of the compressed representation and clustering performance. Data are generated with $n=100$ under the settings of Scenario 2 in Section 3.2, corresponding to the general case of correlated count variables.
To assess the stability of the compressed space with respect to variable ordering, we compare the compressed spaces obtained under the proposed variance-based ordering with those obtained under 100 random permutations of the variables. This comparison is performed for both two- and five-dimensional compressed spaces. Similarity is measured using the pairwise Procrustes correlation, computed with the \texttt{protest} function in the \texttt{vegan} R package. Across simulated datasets, the correlations range from 0.94 to 1, with median 0.97, for $p=100$, and from 0.98 to 1, with median 0.99, for $p=500$. These results indicate that the compressed representation is nearly invariant to variable ordering.


\begin{figure}[t]
\centering
\includegraphics[scale=.56]{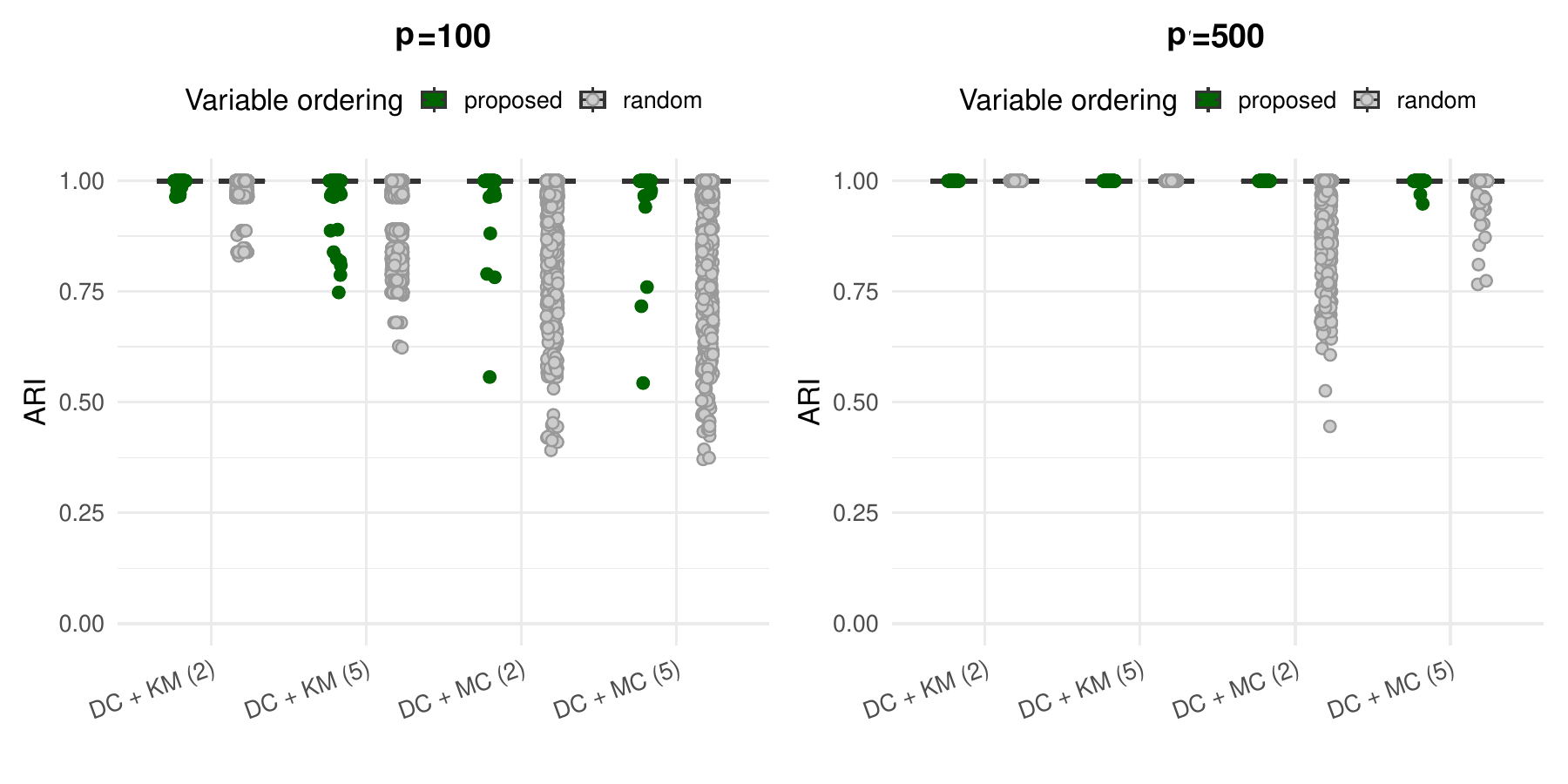}
\caption{\label{fig:sim_order_var_ari} Variable ordering comparison: ARI for true vs. estimated class memberships.}
\end{figure}
We next examine the impact of variable ordering on clustering performance by comparing adjusted Rand index \citep[ARI,][]{hubert:1985} values. The results are summarized in Figure~\ref{fig:sim_order_var_ari}. For $p=100$, the proposed ordering yields higher ARI values than random orderings in $95\%$ of cases for \texttt{mclust} clustering and in $98\%$ of cases for k-means clustering. For $p=500$, the ARI values obtained under the proposed ordering are greater than or equal to those obtained under random orderings in nearly all cases, with the only exception being \texttt{mclust} clustering on the five-dimensional compressed space, where this occurs in $98\%$ of cases.
Overall, while the compressed data representations are largely stable under permutations of the variable order, ordering the variables within each block by decreasing sample variance provides consistent gains in clustering performance.

\section{Simulation study}

In the simulation study in the main text, the proposed DC approach is compared with several widely used dimensionality reduction and clustering procedures. Details on implementations are provided below.

\begin{itemize}
    \item PCA -- Principal component analysis \citep{jolliffe:2002}, implemented using the \texttt{prcomp} function in R. The number of principal components is selected so that they explain at least $80\%$ of the total variance.
    
    \item t-SNE -- t-distributed stochastic neighbor embedding \citep{maaten:2008}, implemented using the \texttt{Rtsne} R package \citep{rtsne}. The data are projected onto two dimensions using the default perplexity value.
    
    \item MDS -- Multidimensional scaling \citep{borg:2005}, implemented using the \texttt{cmdscale} function in R with the Bray--Curtis dissimilarity \citep{chen:2025}. The data are projected onto two dimensions.
    
    \item SC -- Spectral clustering \citep{luxburg:2007}, implemented using the \texttt{specc} function from the \texttt{kernlab} R package \citep{kernlab1,kernlab2}. The number of clusters is selected using the average silhouette criterion.
    
    \item KM -- Standard k-means clustering \citep{bishop:2006}, implemented using the \texttt{kmeans} function in R. The number of clusters is also selected using the average silhouette criterion.
\end{itemize}

The following figures provide the full results for the simulation study presented in Section 3. They report computational times, clustering performance measured by the adjusted Rand index (ARI), and representative low-dimensional projections obtained with the different methods, i.e. data compression (DC), principal component analysis (PCA), multidimensional scaling (MDS), and t-Student distributed stochastic neighbour embedding (t-SNE).

\begin{figure}
\centering
    \includegraphics[scale=.56]{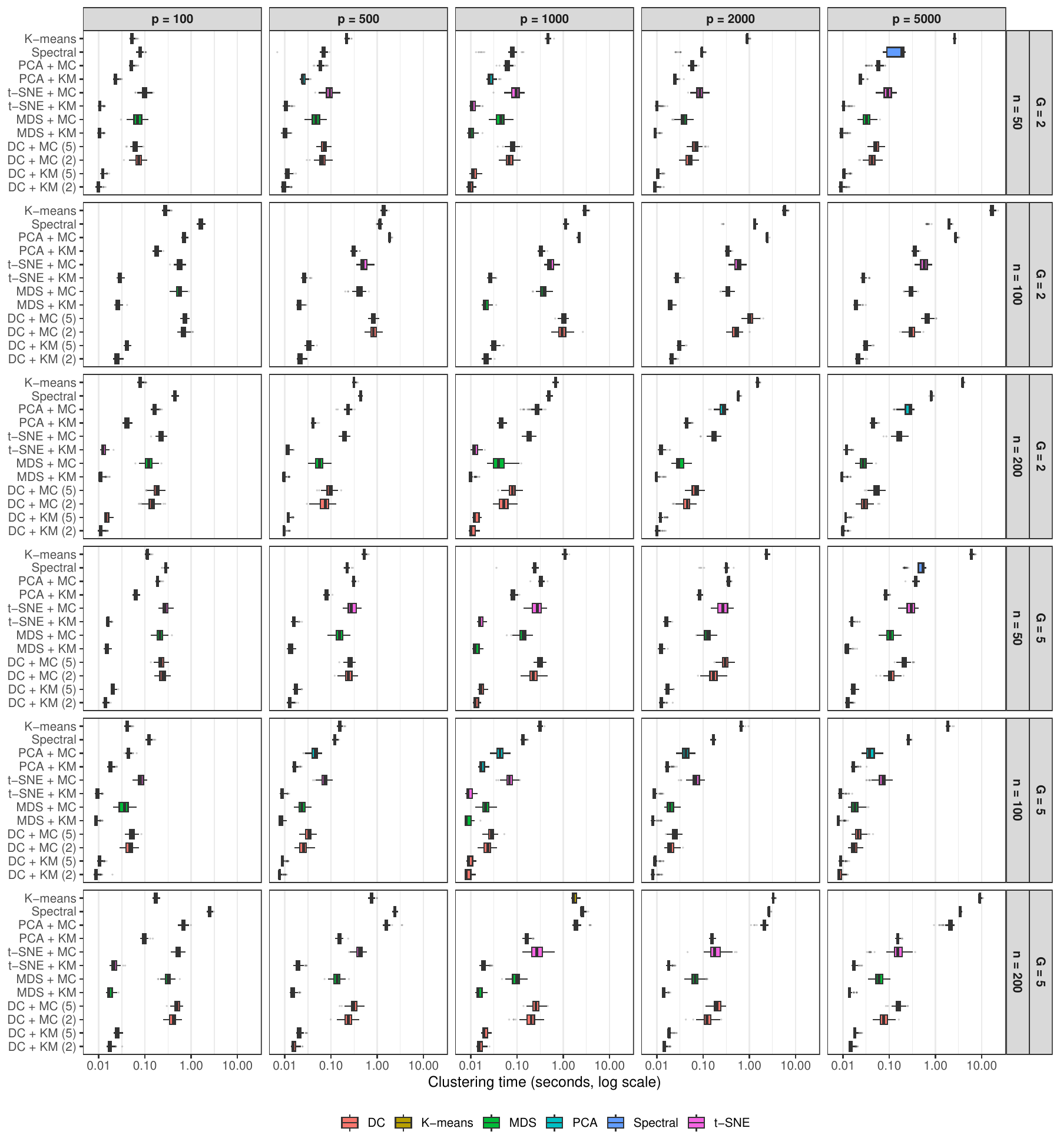}
    \caption{\label{fig:sim1_times} Simulation study -- Scenario 1. Computational times for the different methods.}
\end{figure}

\begin{figure}
\centering
    \includegraphics[scale=.8]{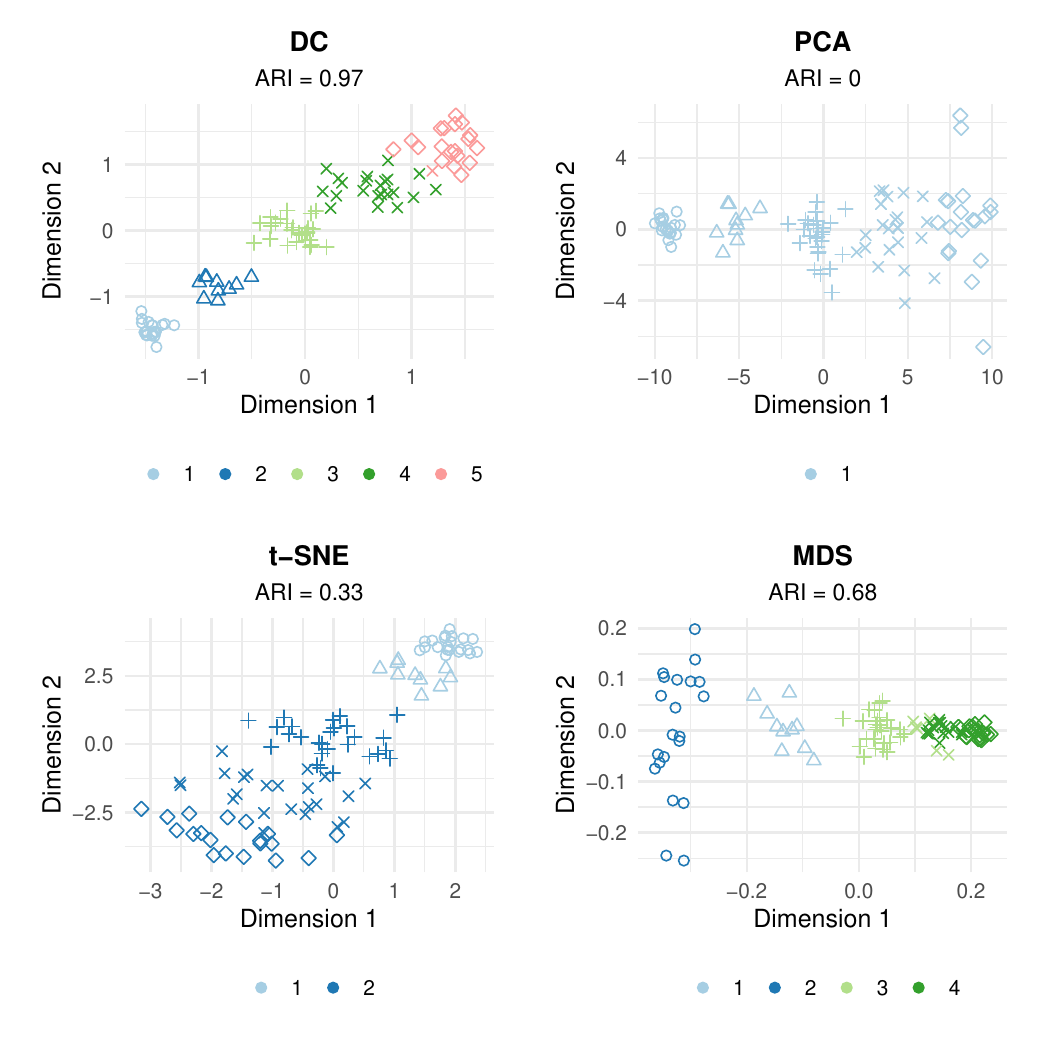}
    \caption{\label{fig:sim1_example} Simulation study -- Scenario 1. Example of low-dimensional representation for $G=5$, $n=100$, and $p=100$, obtained using DC, PCA, t-SNE, and MDS. Symbols indicate simulated true classes, while colors indicate clusters estimated by \texttt{mclust}. Adjusted Rand index (ARI) values are also reported.}
\end{figure}

\begin{figure}
\centering
    \includegraphics[scale=.6]{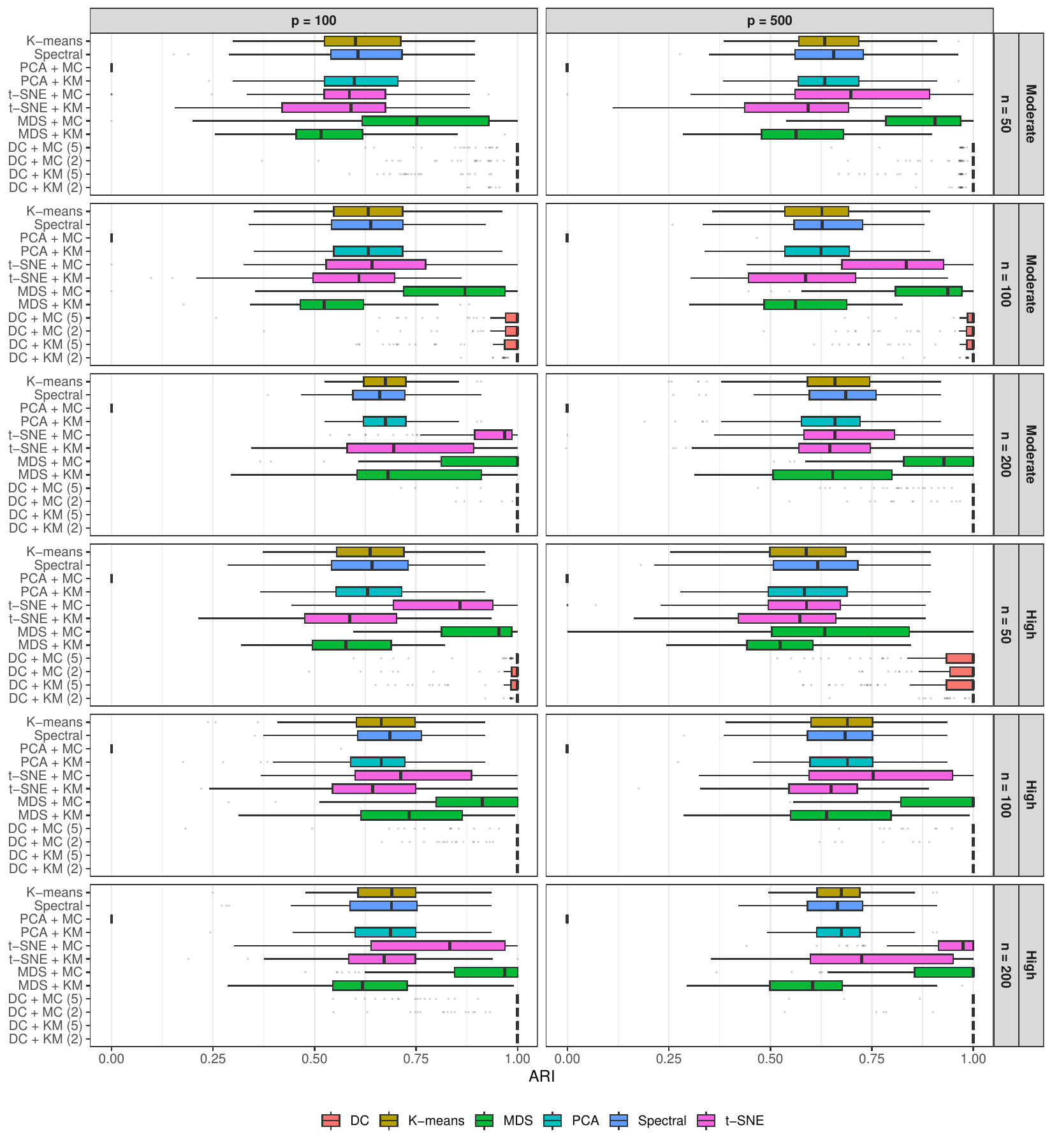}
    \caption{\label{fig:sim2_appendix} Simulation study -- Scenario 2. ARI values between simulated true and estimated class memberships for the different methods.}
\end{figure}

\begin{figure}
\centering
    \includegraphics[scale=.8]{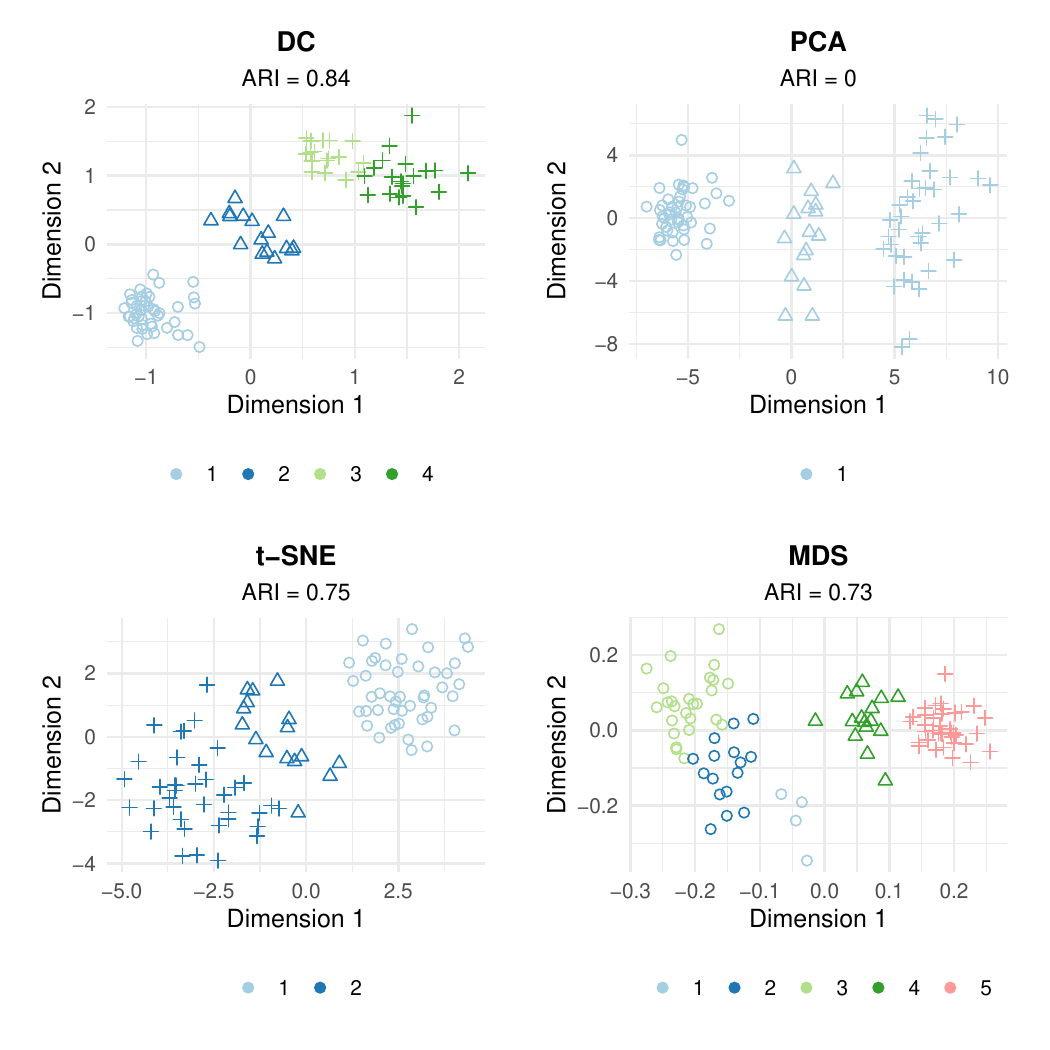}
    \caption{\label{fig:sim2_example} Simulation study -- Scenario 2. Example of low-dimensional representation for $n=100$ and $p=100$ in the high-correlation setting, obtained using DC, PCA, t-SNE, and MDS. Symbols indicate simulated true classes, while colors indicate clusters estimated by \texttt{mclust}. Adjusted Rand index (ARI) values are also reported.}
\end{figure}

\begin{figure}
\centering
    \includegraphics[scale=.6]{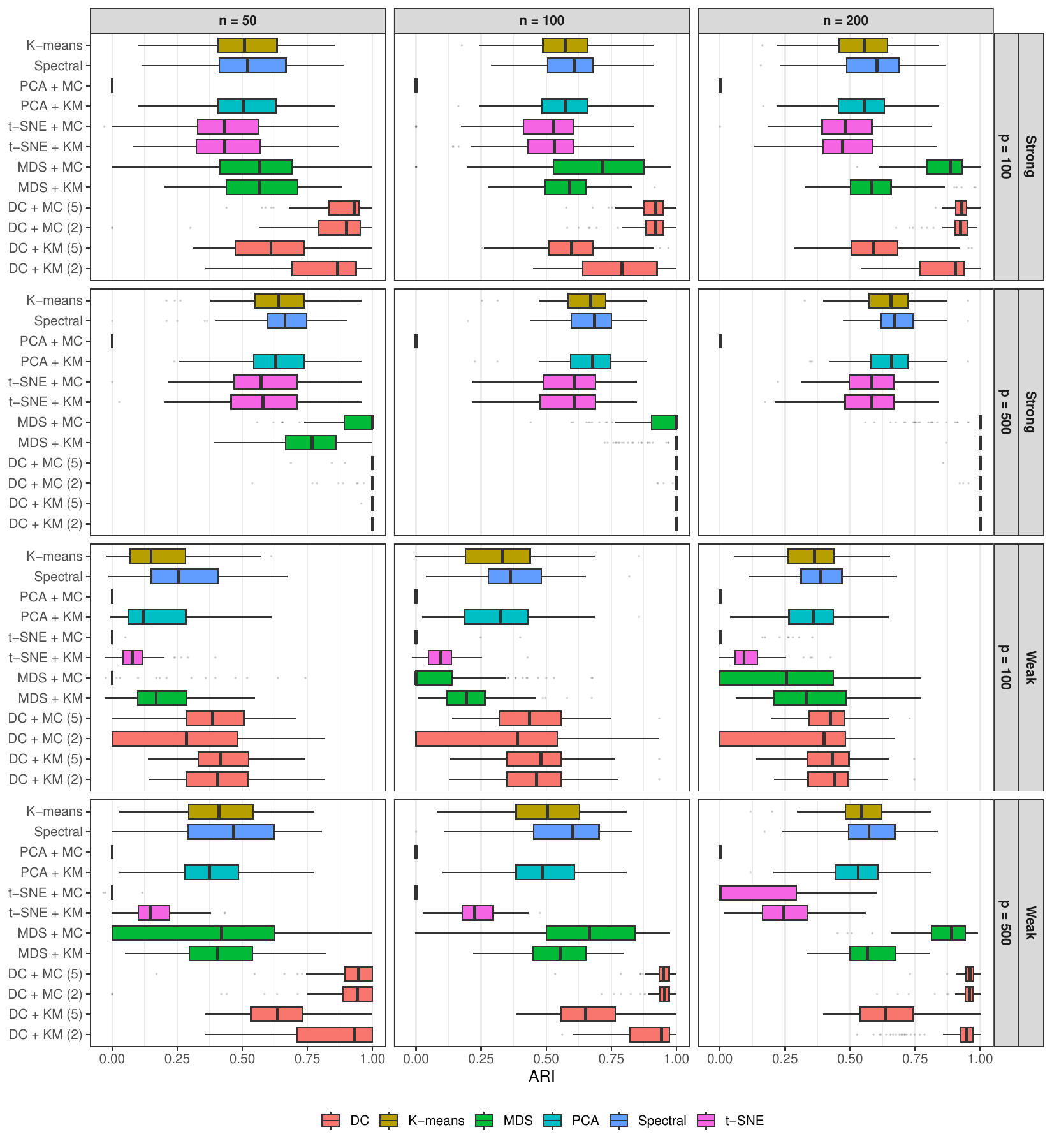}
    \caption{\label{fig:sim3_appendix} Simulation study -- Scenario 3. ARI values between simulated true and estimated class memberships for the different methods.}
\end{figure}

\begin{figure}
\centering
    \includegraphics[scale=.8]{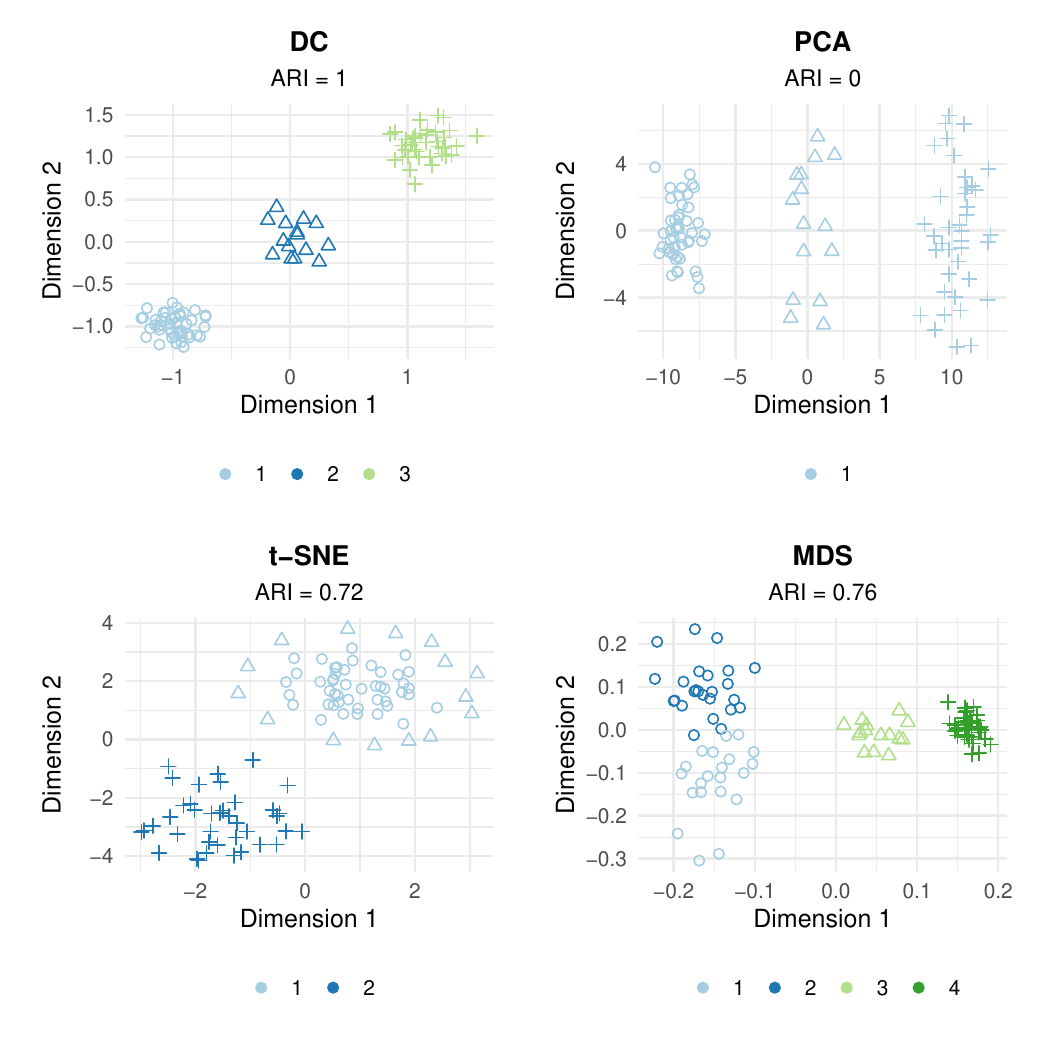}
    \caption{\label{fig:sim3_example} Simulation study -- Scenario 3. Example of low-dimensional representation for $n=100$ and $p=500$ in the high-separation setting, obtained using DC, PCA, t-SNE, and MDS. Symbols indicate simulated true classes, while colors indicate clusters estimated by \texttt{mclust}. Adjusted Rand index (ARI) values are also reported.}
\end{figure}

\begin{figure}
\centering
    \includegraphics[scale=.6]{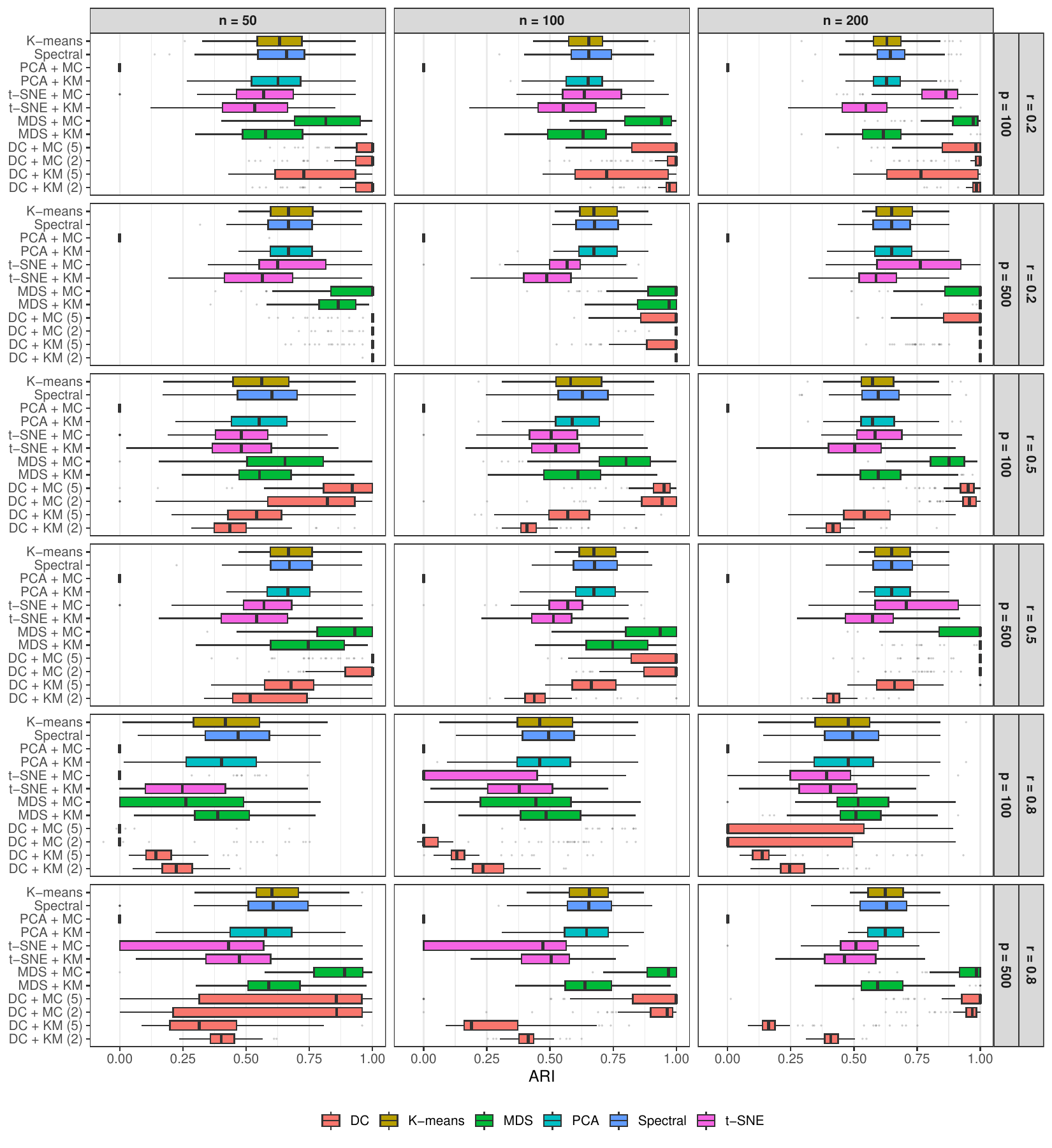}
    \caption{\label{fig:sim41_appendix} Simulation study -- Scenario 4. ARI values between simulated true and estimated class memberships for the different methods, for $p \in \{100,500\}$.}
\end{figure}

\begin{figure}
\centering
    \includegraphics[scale=.6]{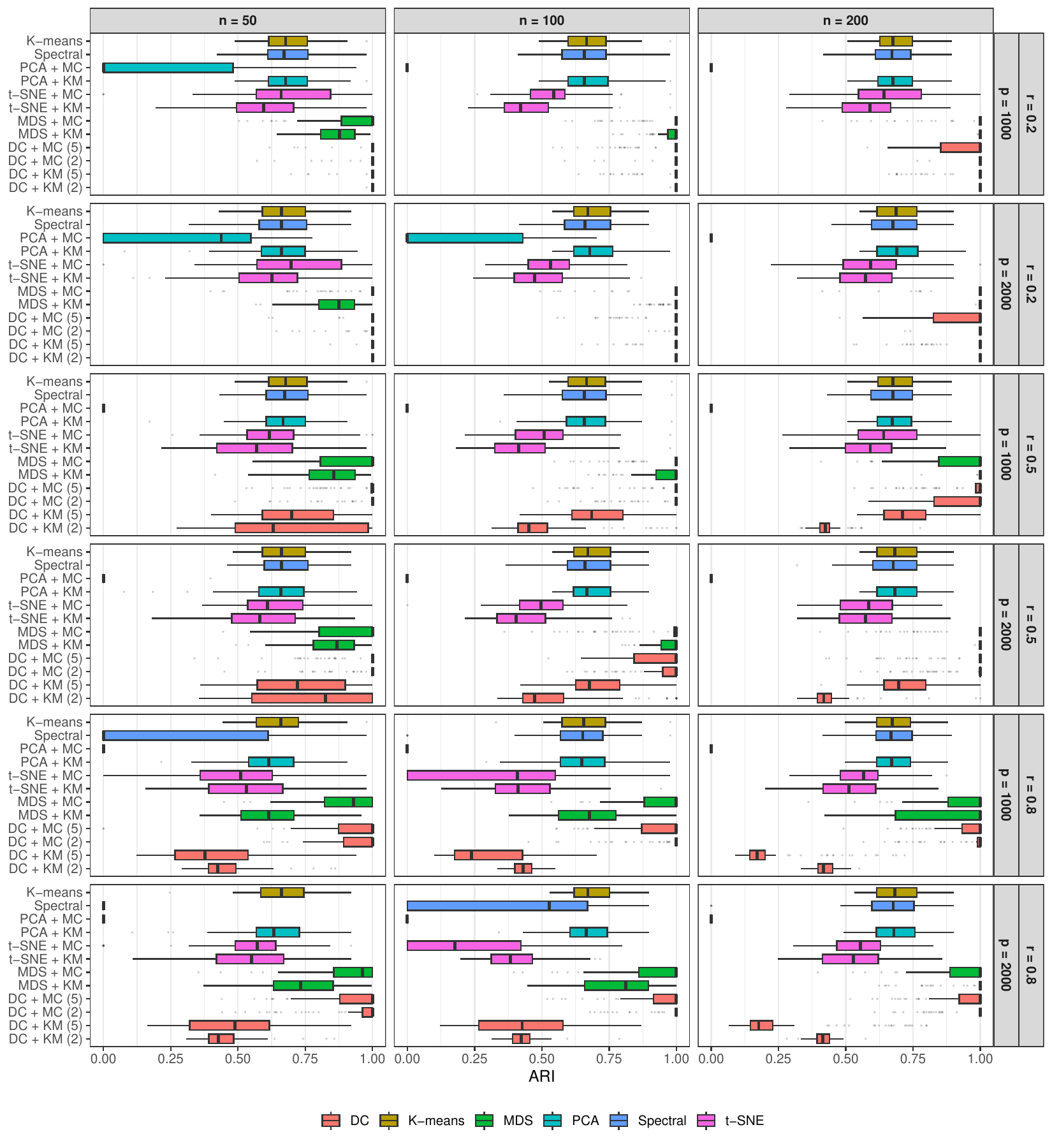}
    \caption{\label{fig:sim42_appendix} Simulation study -- Scenario 4. ARI values between simulated true and estimated class memberships for the different methods, for $p \in \{1000,2000\}$.}
\end{figure}

\begin{figure}
\centering
    \includegraphics[scale=.8]{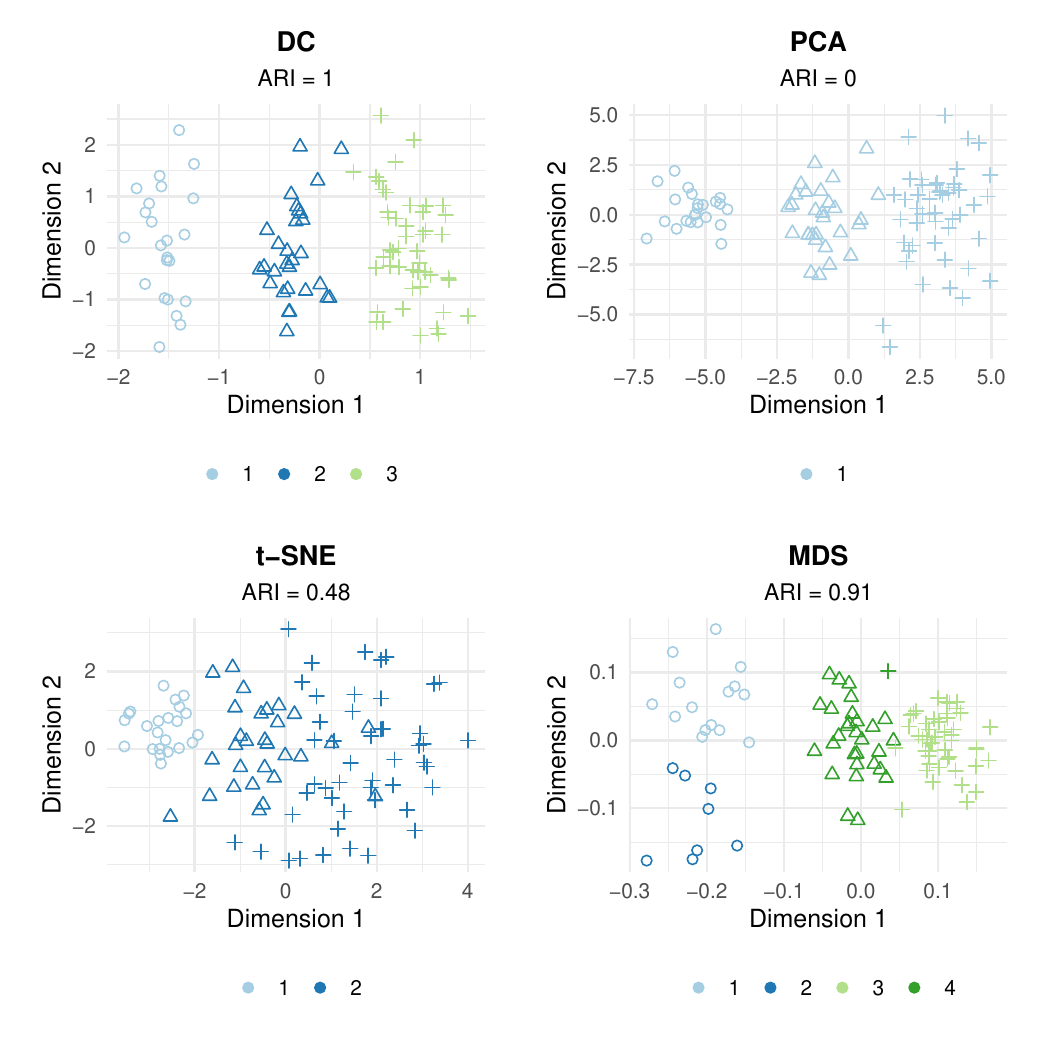}
    \caption{\label{fig:sim4_example} Simulation study -- Scenario 4. Example of low-dimensional representation for $n=100$, $p=100$, and noise proportion $r=0.5$, obtained using DC, PCA, t-SNE, and MDS. Symbols indicate simulated true classes, while colors indicate clusters estimated by \texttt{mclust}. Adjusted Rand index (ARI) values are also reported.}
\end{figure}

\begin{figure}
\centering
    \includegraphics[scale=.6]{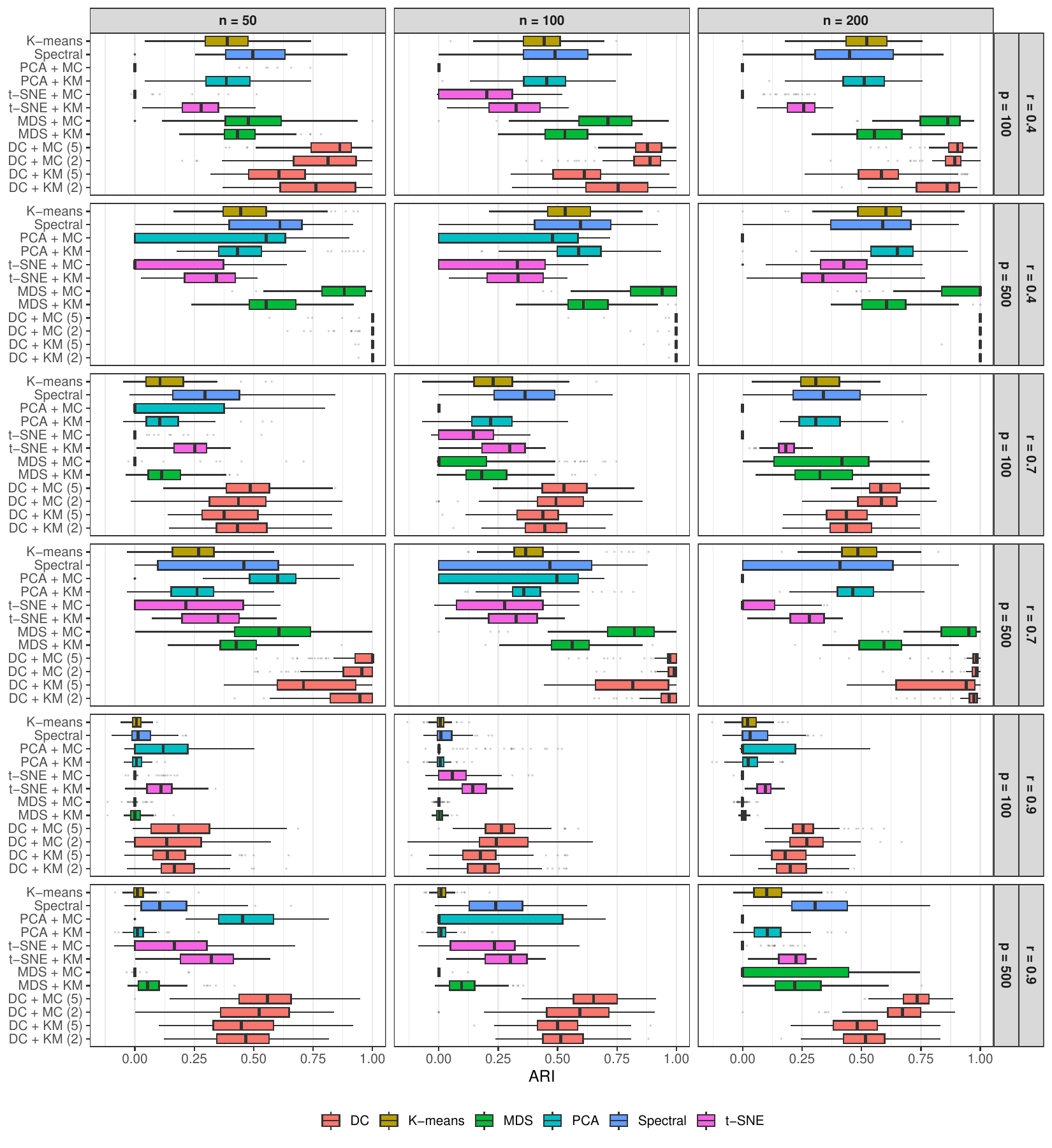}
    \caption{\label{fig:sim51_appendix} Simulation study -- Scenario 5. ARI values between simulated true and estimated class memberships for the different methods, for $p \in \{100,500\}$.}
\end{figure}

\begin{figure}
\centering
    \includegraphics[scale=.6]{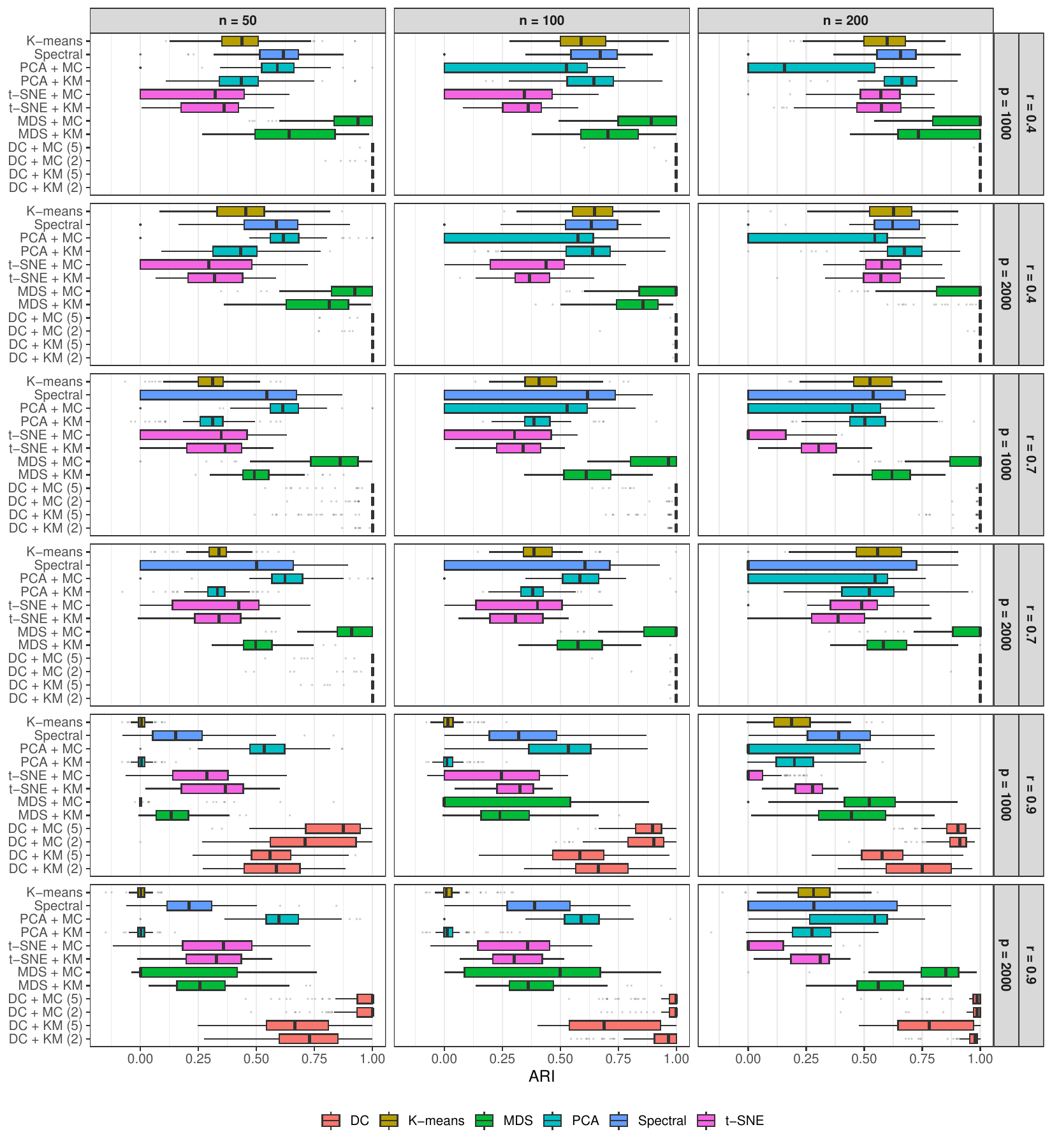}
    \caption{\label{fig:sim52_appendix} Simulation study -- Scenario 5. ARI values between simulated true and estimated class memberships for the different methods, for $p \in \{1000,2000\}$.}
\end{figure}

\begin{figure}
\centering
    \includegraphics[scale=.8]{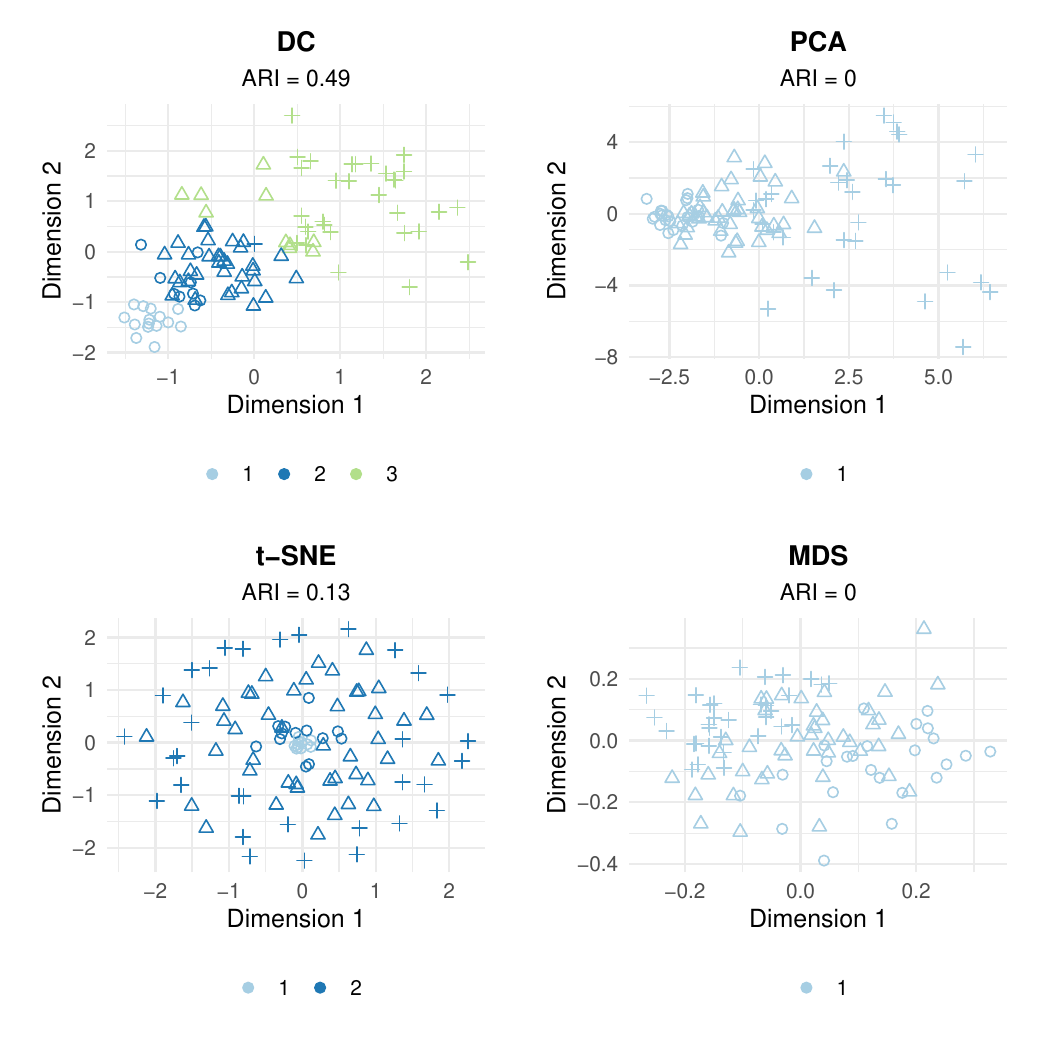}
    \caption{\label{fig:sim5_example} Simulation study -- Scenario 5. Example of low-dimensional representation for $n=100$, $p=100$, and sparsity level $r=0.7$, obtained using DC, PCA, t-SNE, and MDS. Symbols indicate simulated true classes, while colors indicate clusters estimated by \texttt{mclust}. Adjusted Rand index (ARI) values are also reported.}
\end{figure}

\section{Illustrative applications}

\subsection{United Nations voting data}

\subsubsection{DC-based solution}
Below, in alphabetical order, the list of countries contained in the 7 Clusters retrieved using DC- compressed data.
\begin{itemize}
    \item[Cluster 1: ] Albania, Cambodia, Central African Republic, Chad, Comoros, Congo - Kinshasa, Czechoslovakia, Dominica, Equatorial Guinea, Eswatini, Federal Republic of Germany, Gambia, German Democratic Republic, Guatemala, Haiti, Iraq, Israel, Kiribati, Liberia, Malawi, Nauru, Rwanda, São Tomé and  Príncipe, Seychelles, Somalia, South Africa, South Sudan, St. Kitts and Nevis, Taiwan, United States, Yemen Arab Republic, Yemen People's Republic, Yugoslavia, Zanzibar.
    \item[Cluster 2: ] Australia, Austria, Belgium, Canada, Denmark, Finland, France, Iceland, Ireland, Italy, Japan, Luxembourg, Netherlands, New Zealand, Norway, Paraguay, Portugal, Spain, Sweden, United Kingdom.
 \item[Cluster 3: ]  Belarus, Bolivia, Brazil, Colombia, Cuba, Ecuador, Egypt, Ethiopia, Guinea, Honduras, India, Indonesia, Iran, Lebanon, Mexico, Nicaragua, Pakistan, Panama, Peru, Philippines, Poland, Russia, Saudi Arabia, Syria, Thailand, Ukraine, Uruguay, Venezuela.
\item[Cluster 4: ] Afghanistan, Algeria, Argentina, Bahrain, Barbados, Benin, Bhutan, Botswana, Bulgaria, Burkina Faso, Burundi, Cameroon, Chile, China, Congo - Brazzaville, Costa Rica, Côte d’Ivoire, Cyprus, Dominican Republic, El Salvador, Fiji, Gabon, Ghana, Greece, Guinea-Bissau, Guyana, Hungary, Jamaica, Jordan, Kenya, Kuwait, Laos, Lesotho, Libya, Madagascar, Malaysia, Maldives, Mali, Malta, Mauritania, Mauritius, Mongolia, Morocco, Myanmar (Burma), Nepal, Niger, Nigeria, Oman, Qatar, Romania, Senegal, Sierra Leone, Singapore, Sri Lanka, Sudan, Tanzania, Togo, Trinidad and Tobago, Tunisia, Turkey, Uganda, United Arab Emirates, Zambia.
\item[Cluster 5: ] Azerbaijan, Bosnia and Herzegovina, Eritrea, Kazakhstan, Kyrgyzstan, Liechtenstein, Marshall Islands, Micronesia (Federated States of), Montenegro, Namibia, North Korea, Palau, San Marino, Switzerland, Tajikistan, Timor-Leste, Tonga, Turkmenistan, Tuvalu, Uzbekistan, Yemen.
\item[Cluster 6: ] Angola, Antigua and Barbuda, Bahamas, Bangladesh, Belize, Brunei, Cape Verde, Djibouti, Grenada, Mozambique, Papua New Guinea, Samoa, Solomon Islands, St. Lucia, St. Vincent and Grenadines, Suriname, Vanuatu, Vietnam, Zimbabwe.
\item[Cluster 7: ] Andorra, Armenia, Croatia, Czechia, Estonia, Georgia, Germany, Latvia, Lithuania, Moldova, Monaco, North Macedonia, Slovakia, Slovenia, South Korea.
\end{itemize}
\begin{figure}[htbp]
    \centering

    \begin{subfigure}[b]{0.30\textwidth}
        \centering
        \includegraphics[width=\linewidth]{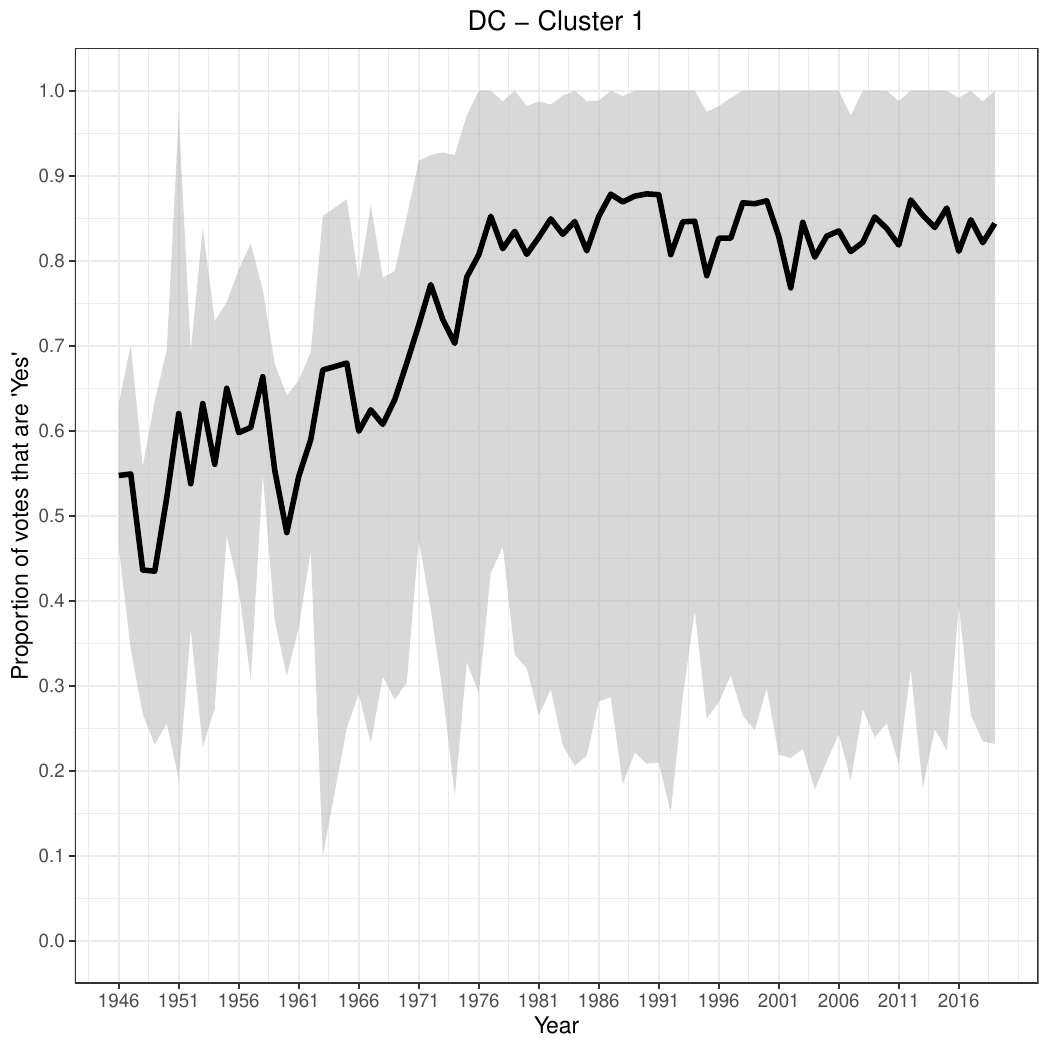}
        \caption{Cluster 1}
        \label{fig:un1}
    \end{subfigure}
    \hfill
    \begin{subfigure}[b]{0.30\textwidth}
        \centering
        \includegraphics[width=\linewidth]{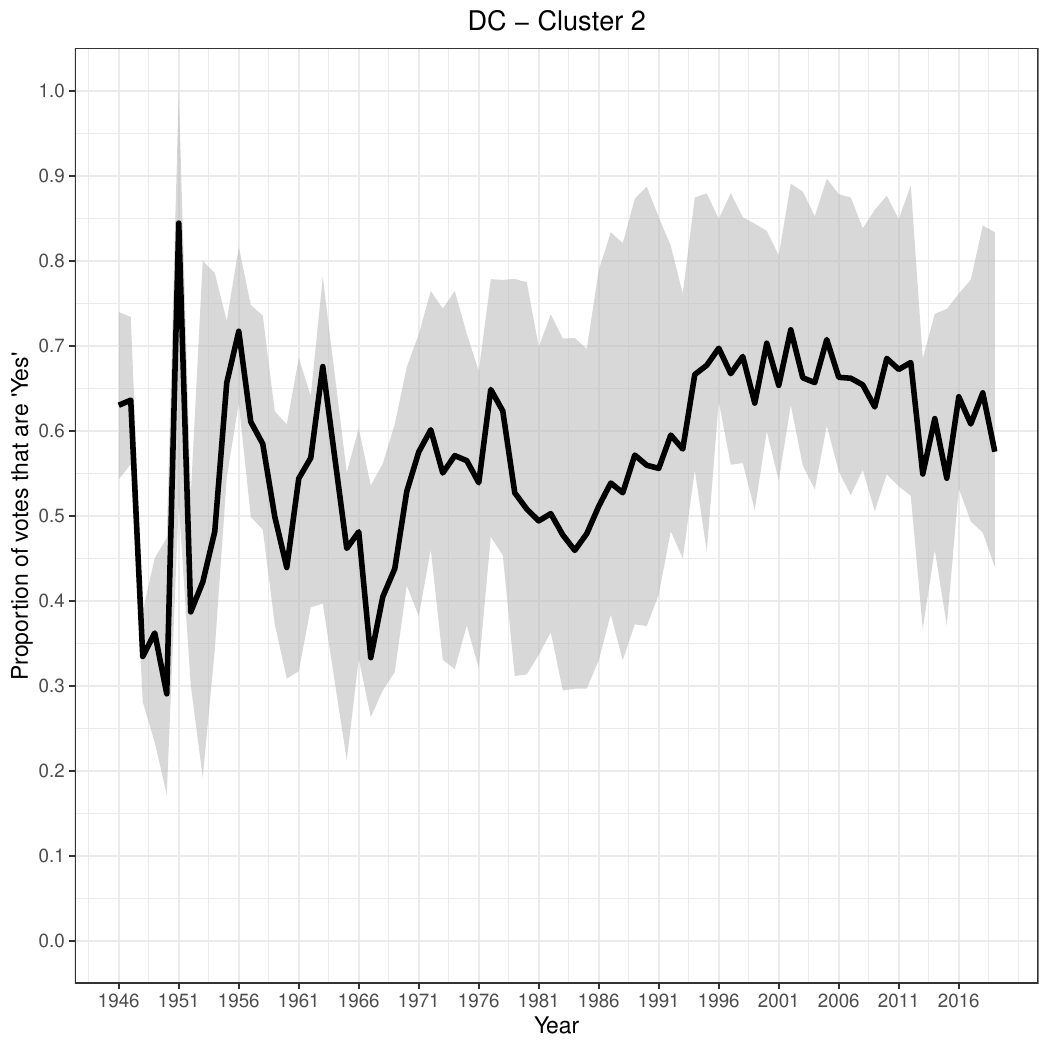}
        \caption{Cluster 2}
        \label{fig:un2}
    \end{subfigure}
    \hfill
    \begin{subfigure}[b]{0.30\textwidth}
        \centering
        \includegraphics[width=\linewidth]{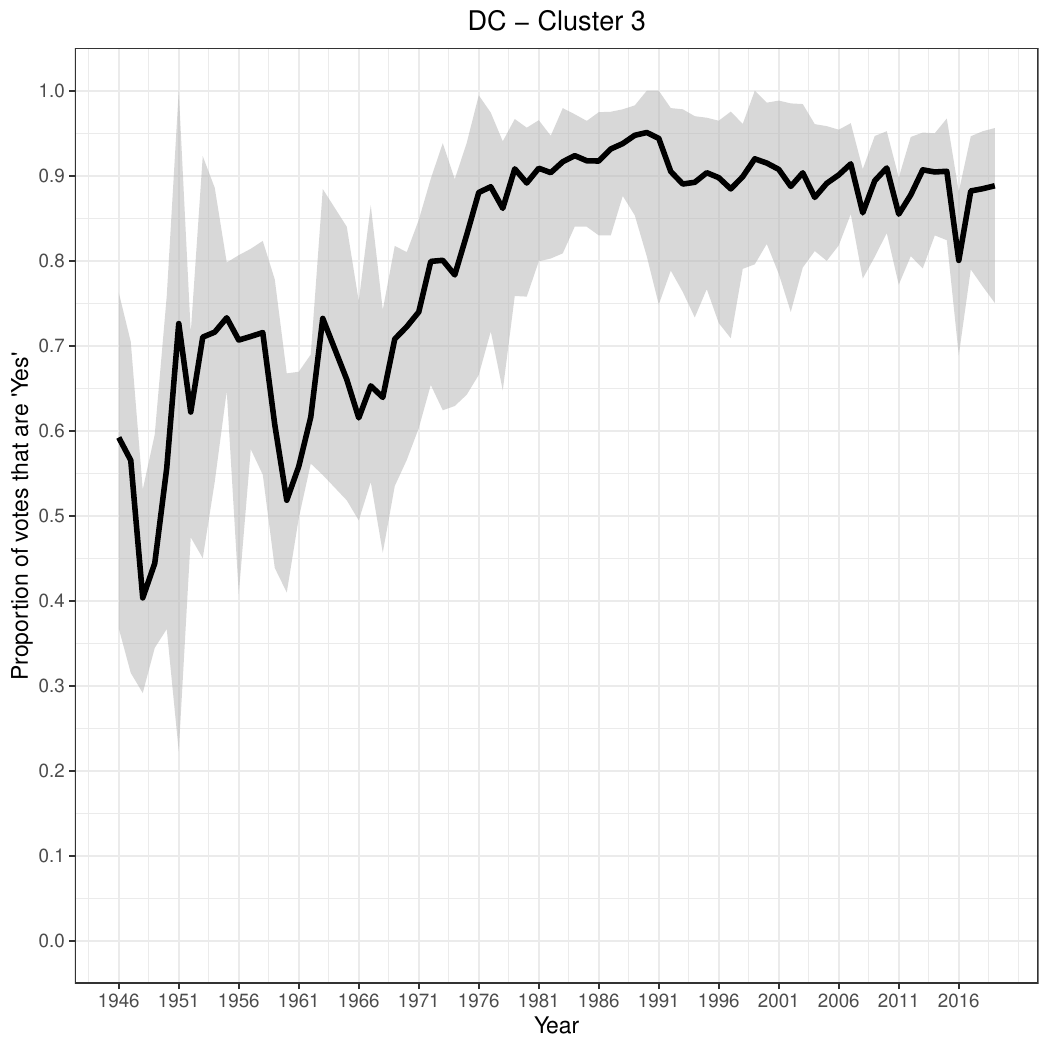}
        \caption{Cluster 3}
        \label{fig:un3}
    \end{subfigure}

    \vspace{0.5cm}

    \begin{subfigure}[b]{0.30\textwidth}
        \centering
        \includegraphics[width=\linewidth]{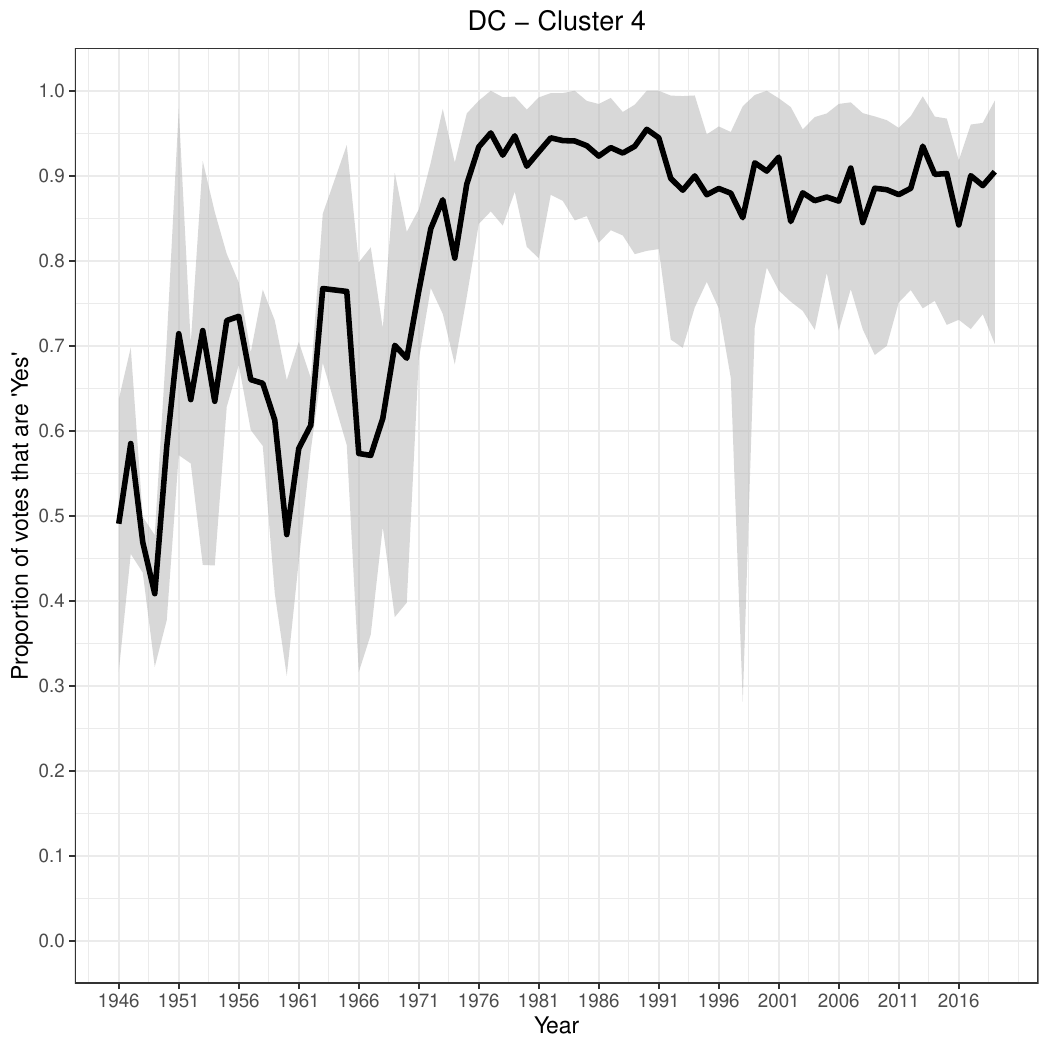}
        \caption{Cluster 4}
        \label{fig:un4}
    \end{subfigure}
    \hfill
    \begin{subfigure}[b]{0.30\textwidth}
        \centering
        \includegraphics[width=\linewidth]{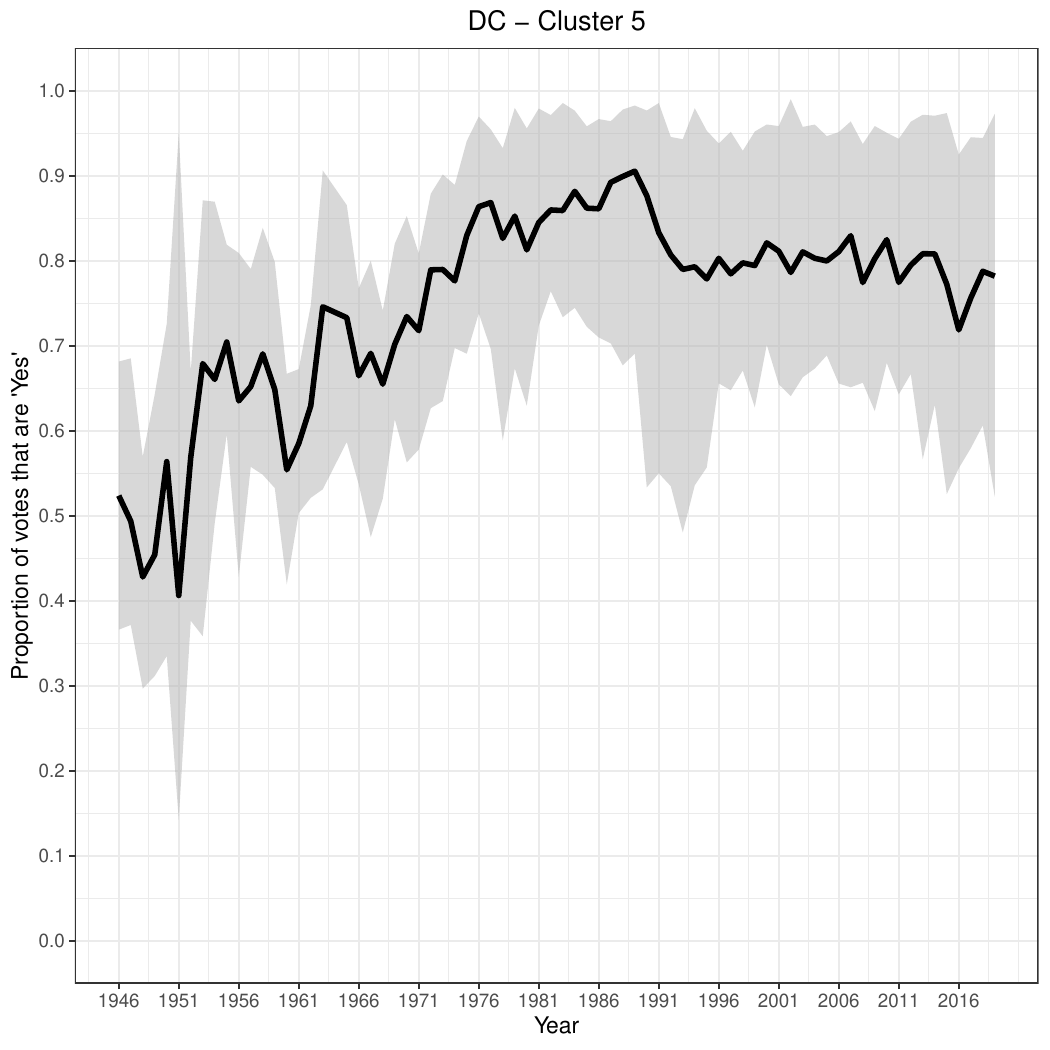}
        \caption{Cluster 5}
        \label{fig:un5}
    \end{subfigure}
    \hfill
    \begin{subfigure}[b]{0.30\textwidth}
        \centering
        \includegraphics[width=\linewidth]{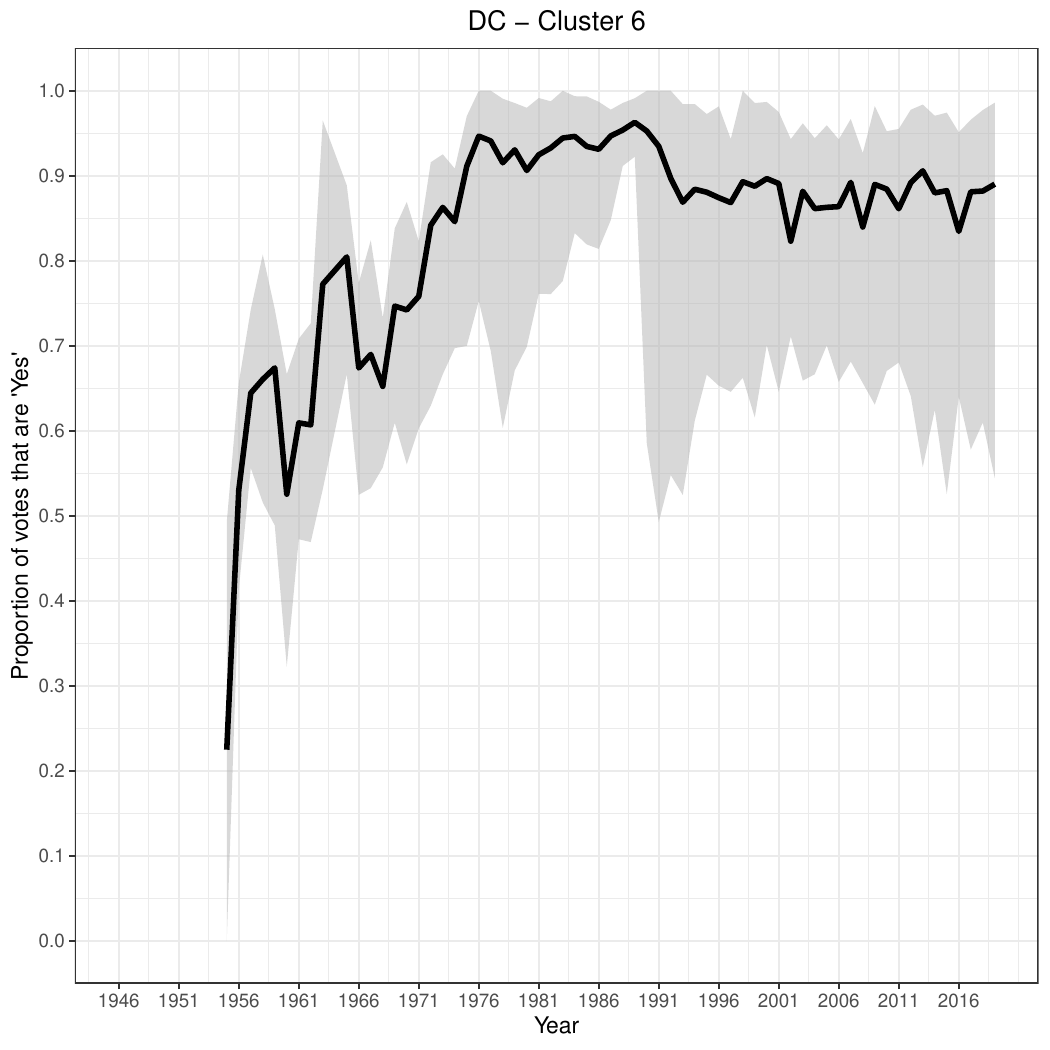}
        \caption{Cluster 6}
        \label{fig:un6}
    \end{subfigure}

    \begin{subfigure}[b]{0.30\textwidth}
        \centering
        \includegraphics[width=\linewidth]{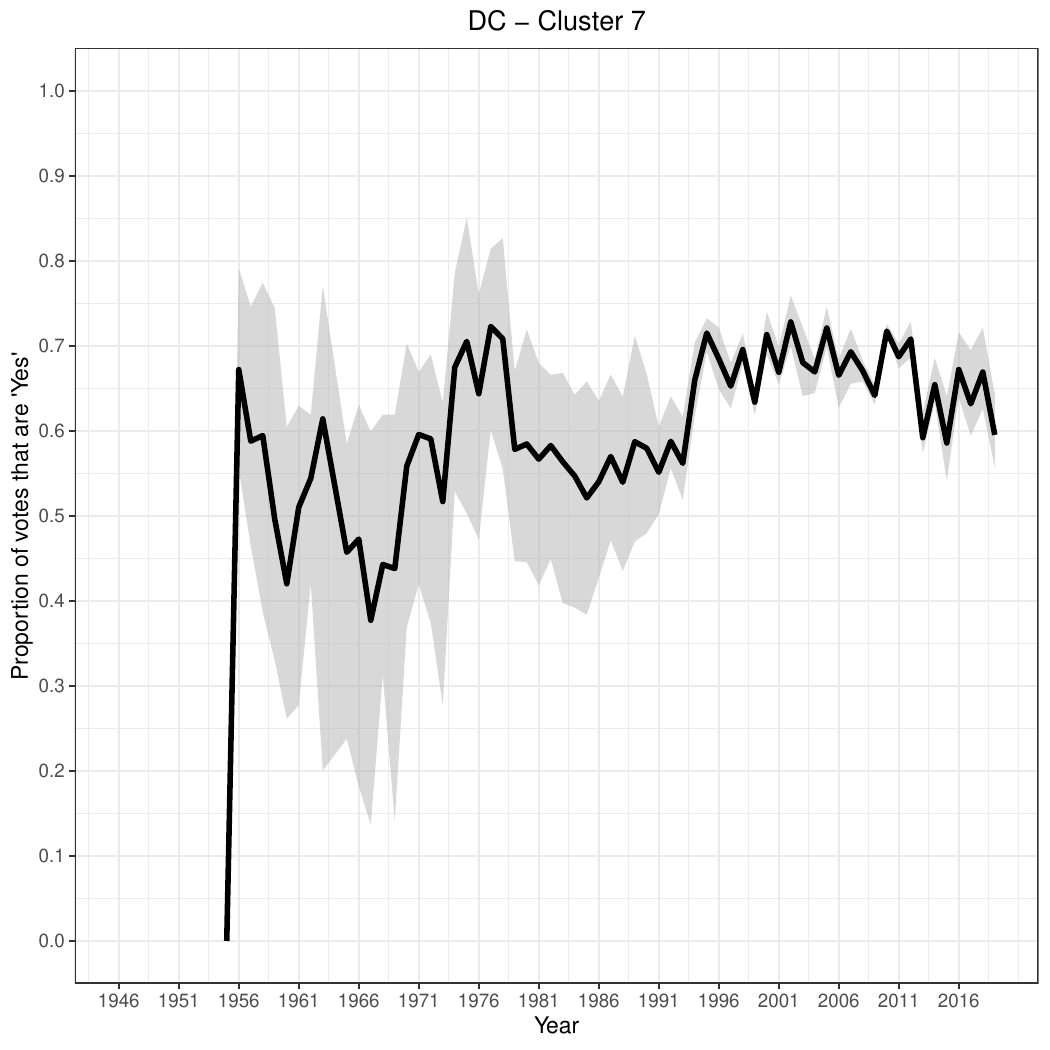}
        \caption{Cluster 7}
        \label{fig:un7}
    \end{subfigure}

    \caption{Yearly ``yes'' frequency for countries in the 7 clusters obtained using the DC-based partition. Bold lines represent average yearly ``yes'' frequency within a cluster, while gray bands the $95\%$ confidence interval.}
    \label{fig:grid}
\end{figure}

\subsubsection{MDS-based solution}
Below, in alphabetical order, the list of countries contained in the 7 Clusters retrieved using MDS- compressed data.
\begin{itemize}
    \item[Cluster 1: ] Albania, Belarus, Bulgaria, Cambodia, Cameroon, Central African Republic, China, Congo - Kinshasa, Cyprus, Czechoslovakia, Equatorial Guinea, Federal Republic of Germany, Gambia, German Democratic Republic, Hungary, Indonesia, Iraq, Israel, Liberia, Malta, Mongolia, Poland, Romania, Russia, Rwanda, Somalia, South Africa, Taiwan, Ukraine, United States, Yemen Arab Republic, Yemen People's Republic, Yugoslavia.
    \item[Cluster 2: ] Australia, Austria, Belgium, Canada, Denmark, Finland, France, Greece, Iceland, Ireland, Italy, Japan, Luxembourg, Netherlands, New Zealand, Norway, Portugal, Spain, Sweden, Turkey, United Kingdom, United States.
 \item[Cluster 3: ]  Afghanistan, Argentina, Bolivia, Brazil, Chile, Colombia, Costa Rica, Cuba, Dominican Republic, Ecuador, Egypt, El Salvador, Ethiopia, Guatemala, Haiti, Honduras, India, Iran, Lebanon, Mexico, Myanmar (Burma), Nicaragua, Pakistan, Panama, Paraguay, Peru, Philippines, Saudi Arabia, Syria, Thailand, Uruguay, Venezuela.
\item[Cluster 4: ] Bahamas, Bahrain, Bangladesh, Barbados, Benin, Bhutan, Botswana, Burkina Faso, Burundi, Chad, Congo - Brazzaville, Côte d’Ivoire, Eswatini, Fiji, Gabon, Ghana, Guinea, Guyana, Jamaica, Jordan, Kenya, Kuwait, Laos, Lesotho, Libya, Madagascar, Malawi, Malaysia, Maldives, Mali, Mauritania, Mauritius, Morocco, Nepal, Niger, Nigeria, Oman, Qatar, Senegal, Sierra Leone, Singapore, Sri Lanka, Sudan, Tanzania, Togo, Trinidad and Tobago, Tunisia, Uganda, United Arab Emirates, Zambia.
\item[Cluster 5: ] Armenia, Azerbaijan, Bosnia and Herzegovina, Eritrea, Kazakhstan, Kiribati, Kyrgyzstan, Marshall Islands, Micronesia (Federated States of), Montenegro, Namibia, Nauru, North Korea, Palau, South Sudan, Switzerland, Tajikistan, Timor-Leste, Tonga, Turkmenistan, Tuvalu, Uzbekistan, Yemen, Zanzibar.
\item[Cluster 6: ] Angola, Antigua and Barbuda, Belize, Brunei, Cape Verde, Comoros, Djibouti, Dominica, Grenada, Guinea-Bissau, Mozambique, Papua New Guinea, Samoa, São Tomé and Príncipe, Seychelles, Solomon Islands, St. Kitts and Nevis, St. Lucia, St. Vincent and Grenadines, Suriname, Vanuatu, Vietnam, Zimbabwe.
\item[Cluster 7: ] Andorra, Croatia, Czechia, Estonia, Georgia, Germany, Latvia, Liechtenstein, Lithuania, Moldova, Monaco, North Macedonia, San Marino, Slovakia, Slovenia, South Korea, Switzerland.
\end{itemize}

\end{document}